\tiny\color{gray},   
\newcommand{\cmmnt}[1]{} 
\definecolor{colorwheel}{HTML}{006eb8}
\definecolor{brickred}{HTML}{7a2500}
\definecolor{mediumorchid}{HTML}{483a94}
\definecolor{applegreen}{HTML}{008000}
\def\BibTeX{{\rm B\kern-.05em{\sc i\kern-.025em b}\kern-.08em
    T\kern-.1667em\lower.7ex\hbox{E}\kern-.125emX}}
\newtheorem{thm}{Theorem}
\definecolor{lime}{HTML}{A6CE39}
\DeclareRobustCommand{\orcidicon}{
	\begin{tikzpicture}
	\draw[lime, fill=lime] (0,0)
	circle[radius=0.16]
	node[white]{{\fontfamily{qag}\selectfont \tiny \.{I}D}}; 
	\end{tikzpicture}	
	\hspace{-2mm}
}
\begin{document}
\title{HiCoCS: High Concurrency Cross-Sharding on Permissioned Blockchains}

\author{Lingxiao~Yang,~\IEEEmembership{Graduate Student Member,~IEEE, }
	Xuewen~Dong,~\IEEEmembership{Member,~IEEE, }
	Zhiguo~Wan,~\IEEEmembership{Member,~IEEE, }
	Di~Lu,~\IEEEmembership{Member,~IEEE, }
	Yushu~Zhang,~\IEEEmembership{Senior Member,~IEEE, }
	and~Yulong~Shen,~\IEEEmembership{Member,~IEEE}
	\IEEEcompsocitemizethanks{
		\IEEEcompsocthanksitem Lingxiao Yang and Xuewen Dong (Corresponding author) are with the School of Computer Science and Technology, Xidian University, the Engineering Research Center of Blockchain Technology Application and Evaluation, Ministry of Education, and also with the Shaanxi Key Laboratory of Blockchain and Secure Computing, Xi’an 710071, China (e-mail: lxyang@stu.xidian.edu.cn; xwdong@xidian.edu.cn).
		\IEEEcompsocthanksitem Zhiguo Wan is with Zhejiang Lab, Hangzhou, Zhejiang 311121, China (e-mail: wanzhiguo@zhejianglab.com).
		\IEEEcompsocthanksitem Di Lu and Yulong Shen are with the School of Computer Science and Technology, Xidian University, and also with the Shaanxi Key Laboratory of Network and System Security, Xi’an 710071, China (e-mail: dlu@xidian.edu.cn; ylshen@mail.xidian.edu.cn).
		\IEEEcompsocthanksitem Yushu Zhang is with the College of Computer Science and Technology, Nanjing University of Aeronautics and Astronautics
		Nanjing, Jiangsu 210016, China (e-mail: yushu@nuaa.edu.cn).}
}

\maketitle
\begin{abstract}
As the foundation of the Web3 trust system, blockchain technology faces increasing demands for scalability. Sharding emerges as a promising solution, but it struggles to handle highly concurrent cross-shard transactions (\textsf{CSTx}s), primarily due to simultaneous ledger operations on the same account. Hyperledger Fabric, a permissioned blockchain, employs multi-version concurrency control for parallel processing. Existing solutions use channels and intermediaries to achieve cross-sharding in Hyperledger Fabric. However, the conflict problem caused by highly concurrent \textsf{CSTx}s has not been adequately resolved. To fill this gap, we propose HiCoCS, a high concurrency cross-shard scheme for permissioned blockchains. HiCoCS creates a unique virtual sub-broker for each \textsf{CSTx} by introducing a composite key structure, enabling conflict-free concurrent transaction processing while reducing resource overhead. The challenge lies in managing large numbers of composite keys and mitigating intermediary privacy risks. HiCoCS utilizes virtual sub-brokers to receive and process \textsf{CSTx}s concurrently while maintaining a transaction pool. Batch processing is employed to merge multiple \textsf{CSTx}s in the pool, improving efficiency. We explore composite key reuse to reduce the number of virtual sub-brokers and lower system overhead. Privacy preservation is enhanced using homomorphic encryption. Evaluations show that HiCoCS improves cross-shard transaction throughput by 3.5-20.2 times compared to the baselines.
\end{abstract}

\begin{IEEEkeywords}
Blockchain scalability, cross-shard transaction, high concurrency processing, permissioned blockchain
\end{IEEEkeywords}

\section{Introduction} \label{intro}
\lettrine[lines=2]{T}{he} advent of Web3 marks a significant evolution in the Internet landscape, with blockchain technology playing a crucial role in establishing trust within this new paradigm \cite{zheng2017overview, zhu2021microthingschain, tong2023ti, yang2024asyncsc}. Despite its potential, blockchain technology faces significant scalability challenges, which limit its performance in large-scale applications \cite{yang2023optimal}. Sharding, a technique borrowed from distributed databases, has emerged as a promising solution to enhance blockchain scalability by partitioning the network into smaller, manageable segments that can process transactions independently, thereby increasing overall throughput \cite{huang2022brokerchain}.
	
Most existing blockchain sharding solutions\cite{luu2016secure, kokoris2018omniledger, zamani2018rapidchain, al2018chainspace, wang2019monoxide, Zicong21, huang2022brokerchain, licochain, androulaki2018hyperledger, tong2019hierarchical, hong2022scaling, cai2022benzene, peng2022neuchain, hong2023prophet, li2022achieving, liu2023flexible, hong2023gridb, qi2024lightcross, xu2024x, 10621394}, such as Elastico \cite{luu2016secure}, Omniledger \cite{kokoris2018omniledger}, Rapidchain \cite{zamani2018rapidchain}, and Pyramid \cite{Zicong21}, primarily focus on permissionless blockchains \cite{al2024sok}. However, permissioned blockchains like Hyperledger Fabric \cite{androulaki2018hyperledger}, which are widely used in enterprise settings, also require effective sharding solutions to meet the growing demand for scalable and efficient blockchain applications. MDIoTSP explores utilizing Fabric's channels as sub-shards to enhance scalability in multi-domain Internet of Things (IoT) systems \cite{tong2019hierarchical}. However, efficiently managing highly concurrent cross-shard transactions (\textsf{CSTx}s) remains a significant challenge. Existing solutions \cite{luu2016secure, kokoris2018omniledger, zamani2018rapidchain, al2018chainspace, wang2019monoxide, Zicong21, huang2022brokerchain, licochain, androulaki2018hyperledger, tong2019hierarchical, hong2022scaling, cai2022benzene, li2022achieving, liu2023flexible, hong2023gridb, qi2024lightcross, xu2024x, 10621394} frequently experience high transaction abort rates due to conflicts arising when multiple \textsf{CSTx}s attempt to access the same ledger state simultaneously. These conflicts are particularly problematic in permissioned blockchains, where maintaining high throughput and low latency is crucial for enterprise applications.

\begin{figure}[t]
	\vspace{-0.2cm}
	\subfigure[Illustration of the MVCC conflict.]{
		\begin{minipage}[t]{0.58\linewidth}
			\centering
			\includegraphics[width=2in]{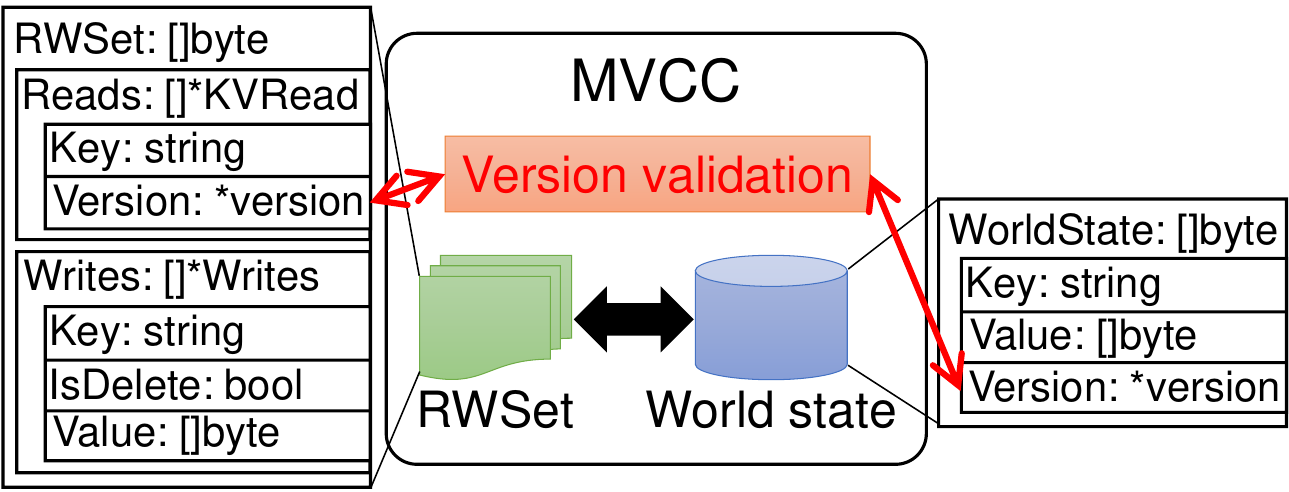}
			\label{1a}
			\vspace{-0.4cm}
		\end{minipage}
	}
	\subfigure[A vanilla version.]{
		\begin{minipage}[t]{0.35\linewidth}
			\centering
			\includegraphics[width=1.2in]{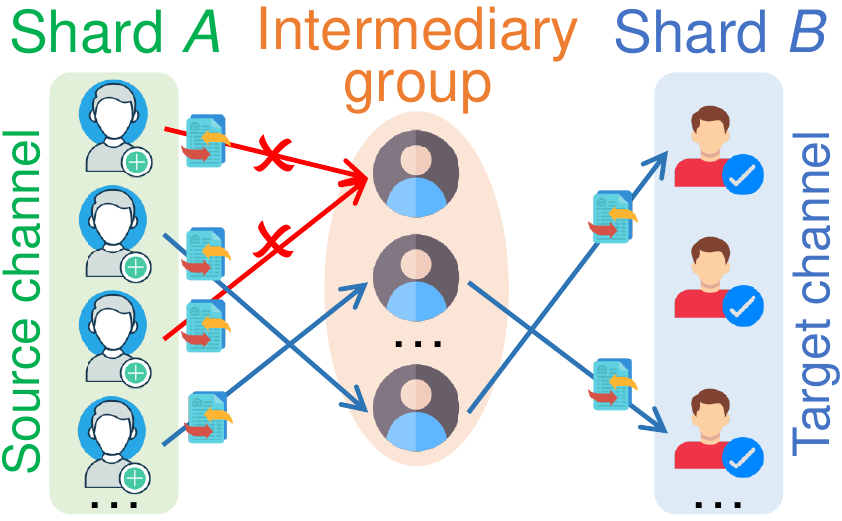}
			\label{1b}
			\vspace{-0.4cm}
		\end{minipage}
	}
	\centering
	\vspace{-0.4cm}
	\caption{The MVCC conflict and a cross-shard example in Hyperledger Fabric.}
	\vspace{-0.6cm}
\end{figure}

We further elucidate the existing problems. As shown in Fig. \ref{1a}, Fabric implements concurrent transaction data consistency based on multi-version concurrency control (MVCC) \cite{muro1984multi, wu2017empirical}. The concept is that each transaction's key-value pair data has a version number, and a new version is generated each time the data is modified. When a transaction reads or writes data, it checks whether the version number of the data is consistent with the expected value. If it is not, the transaction needs to be rolled back and re-executed. Thus, MVCC improves read concurrency but may increase transaction abort rates when handling write operations. Fig. \ref{1b} shows a \textit{vanilla version} of cross-shard implementation in Fabric using channels and intermediaries (or brokers \cite{huang2022brokerchain}). A source channel (i.e., shard \textit{A}) initiates \textsf{CSTx}s, and shard \textit{B} is a target channel that receives \textsf{CSTx}s. Each intermediary joins both shard \textit{A} and shard \textit{B}. Each \textsf{CSTx} needs to be transferred to an intermediary first, and then the intermediary transfers the assets to the target-shard user. As a result, when the concurrency of \textsf{CSTx}s is high, the ledger state of the intermediary account is frequently read and written, which causes MVCC conflicts due to data contention. The retries of aborted \textsf{CSTx}s increase overhead and degrade the quality of cross-shard service.

\textbf{Motivation.} As interoperability requirements expand, the volume of \textsf{CSTx}s increases rapidly, and mitigating the conflicts that often arise between them is key to improving the efficiency of \textsf{CSTx}s. With the adoption of sharding, inter-contract calls between contracts in different shards result in massive \textsf{CSTx}s. According to public data from Ethereum, more than 70\% of inter-contract calls occur more than twice \cite{Zicong21}. Large enterprises typically involve heavy partitioned transactions and data flows, which require \textsf{CSTx}s to manage data interactions between different partitions. Surprisingly, however, \textit{there is almost no research on high concurrency cross-sharding for Hyperledger Fabric.} In addition, traditional concurrency control mechanisms, such as two-phase locking (2PL) \cite{kokoris2018omniledger, al2018chainspace} and optimistic concurrency control (OCC) \cite{wang2019monoxide, Zicong21}, are ineffective at addressing these conflicts in high concurrency environments, leading to significant performance bottlenecks. Thus, to fill this gap, we conduct a systematic study to utilize Fabric's internal features to achieve high concurrency cross-sharding.

\textbf{Challenges.} We mainly face two challenges: (i) How to lightweight and effectively resolve the contradiction between the demand for highly concurrent \textsf{CSTx}s and MVCC conflicts among intermediaries? The naive approach is to directly add new intermediaries, but this increases resource consumption and system complexity. The added intermediaries still need to coordinate with other nodes when processing \textsf{CSTx}s, which may lead to new MVCC conflicts. (ii) How to balance concurrent transaction efficiency with privacy? The vanilla version overlooks the privacy-preserving issue of cross-sharding. In practice, intermediaries are semi-trusted. Each intermediary may honestly complete the \textsf{CSTx} processing, but it could be honest-but-curious, attempting to steal privacy (e.g., transaction amount) from data generated by the participants.

\textbf{Contributions.} To address the above challenges, we focus on mitigating the conflict problem of highly concurrent \textsf{CSTx}s and propose HiCoCS, a novel \textbf{hi}ghly \textbf{co}ncurrent \textbf{c}ross-\textbf{s}hard scheme. Different from prior work, HiCoCS introduces a unique virtual sub-broker mechanism using Fabric's composite key functionality. It only requires smart contract API calls, not the actual addition of new nodes, which greatly reduces resource consumption. By creating virtual sub-brokers on existing intermediary nodes, \textsf{CSTx}s can be refined to a smaller granularity, effectively reducing the frequency of MVCC conflicts. Batch management of virtual sub-brokers by intermediaries also improves transaction efficiency. On the other hand, HiCoCS combines homomorphic encryption to address the privacy preservation problem of transaction amounts calculated at the intermediaries. The use of composite keys efficiently identifies and manages \textsf{CSTx}s to avoid data conflicts, and homomorphic encryption allows operations on encrypted data without decryption, enabling the accumulation of concurrent \textsf{CSTx}s while maintaining data privacy and security. The combination of the two reduces conflicts and rollbacks in traditional concurrency control mechanisms. It also balances the security and performance of highly concurrent \textsf{CSTx}s. We summarize the contributions of HiCoCS as follows:

$\bullet$ \textbf{High concurrency cross-sharding.} To the best of our knowledge, HiCoCS is the first proposed solution to address the problem of highly concurrent \textsf{CSTx}s on Hyperledger Fabric. It utilizes composite keys to construct virtual sub-brokers for cross-shard intermediaries to mitigate concurrency conflicts, supporting the high concurrency of receiving and processing \textsf{CSTx}s. Based on maintaining a transaction pool, HiCoCS adopts the idea of batch processing to design an incremental accumulation module that summarizes the transactions in the pool, and the merged processing of multiple \textsf{CSTx}s reduces conflicts and improves efficiency.

$\bullet$ \textbf{Privacy-preserving cross-sharding.} For the privacy risk posed by semi-trusted intermediaries, we utilize Cheon-Kim-Kim-Song (CKKS) fully homomorphic encryption to achieve privacy preservation for \textsf{CSTx}s. It allows the incremental accumulation of \textsf{CSTx}s by intermediaries for virtual sub-brokers to be computed under ciphertext, balancing security and concurrency performance. We analyze and prove the security of the scheme.

$\bullet$ \textbf{Composite key reuse mechanism.} To further reduce the resource consumption caused by managing composite keys, we design a composite key reuse mechanism. It includes a Composite Key Proof of Equivalence (CKPoE) protocol that summarizes and regenerates composite keys to reduce the number of virtual sub-brokers and lower system overhead.

$\bullet$ \textbf{Prototype evaluation.} We implement a prototype of HiCoCS on Hyperledger Fabric and perform comprehensive comparisons with baselines. Evaluation results show that HiCoCS outperforms the state-of-the-art schemes in terms of transaction success rate (improved by 2.2 to 8.1 times), throughput (improved by 3.5 to 20.2 times), latency (reduced by 43.9\% to 62.0\%), and CPU \& memory utilization rates.

\section{Related Work} \label{RW}
In this section, we categorize the literature related to \textsf{CSTx} conflict resolution based on its implementation and compare the corresponding features.

Existing work \cite{luu2016secure, kokoris2018omniledger, zamani2018rapidchain, al2018chainspace, wang2019monoxide, Zicong21, huang2022brokerchain, hong2023prophet, licochain, androulaki2018hyperledger, tong2019hierarchical, hong2022scaling, cai2022benzene, peng2022neuchain, li2022achieving, liu2023flexible, hong2023gridb, qi2024lightcross, xu2024x, 10621394} mainly employs two concurrency control mechanisms to mitigate \textsf{CSTx} conflicts: (i) \textit{Two-phase locking (2PL).} It is a pessimistic concurrency control approach that assumes transaction conflicts occur frequently, thus ensuring isolation by adding locks during transaction execution. (ii) \textit{Optimistic concurrency control (OCC).} It assumes that conflicts are relatively infrequent, does not add locks during execution, and verifies whether a conflict arises when the transaction is committed. Based on this, we illustrate representative sharded permissionless and permissioned blockchains, providing a feature comparison in Table \ref{features}.

\textit{2PL schemes:} Omniledger \cite{kokoris2018omniledger} proposes a Byzantine shard atomic commit protocol, Atomix, to ensure the consistency of transactions. Rapidchain \cite{zamani2018rapidchain} parallelizes data and computation through full node sharding. Chainspace \cite{al2018chainspace} proposes object-oriented smart contract sharding, which assigns different contracts and transactions to different shards for processing. Tong \textit{et al}. \cite{tong2019hierarchical} implemented a sharding system for Hyperledger Fabric using channels. AHL+ \cite{dang2019towards} relies on trusted hardware to enhance the Hyperledger Fabric to handle cross-shard distributed transactions. Aeolus \cite{zheng2022aeolus} proposes distributed state update sharding to maintain consistency across different clusters. Set \textit{et al}. \cite{set2022service} proposed a service-aware dynamic sharding that utilizes a reference committee to act as a coordinator for concurrency control.

\textit{OCC schemes:} Monoxide \cite{wang2019monoxide} achieves fast processing and eventual consistency of \textsf{CSTx}s through asynchronous cross-zone validation. Pyramid \cite{Zicong21} employs a hierarchical sharding architecture and a recursive consensus protocol to ensure efficient communication and global consistency across different layers and shards. CoChain \cite{licochain} designed a cross-shard Consensus on Consensus mechanism to securely configure small shards and enhance concurrency. Androulaki \textit{et al}. \cite{androulaki2018channels} achieved sharding on Hyperledger Fabric using multiple channels and utilized Merkle Tree to process \textsf{CSTx}s. Meepo \cite{zheng2022meepo} introduces cross-epoch and cross-call protocols for ordered cross-shard communication.

\textbf{Comparison.} Existing approaches have their own merits. However, none of these efforts properly address the conflict problem in highly concurrent \textsf{CSTx} scenarios. According to the evaluation of \textsc{Prophet} \cite{hong2023prophet}, more than half of the \textsf{CSTx}s are aborted or rolled back due to race conditions (i.e., frequent read/write to the same ledger state). Thus, 2PL and OCC schemes usually exhibit high transaction abort rates (i.e., aborted \textsf{CSTx}s may require multiple retries to succeed), sacrificing efficiency for the serializability of \textsf{CSTx}s. Our proposed HiCoCS uses composite keys to create virtual sub-brokers for conflicted intermediary accounts. It uses a message-passing approach to ensure the consistency of \textsf{CSTx}s, utilizing storage space in exchange for high concurrency \textsf{CSTx} efficiency. In addition, most existing work does not consider \textsf{CSTx} privacy preservation. HiCoCS combines homomorphic encryption to balance performance and security. Meanwhile, HiCoCS supports batch processing of \textsf{CSTx}s by utilizing composite keys for accumulated summarization of transaction ciphertexts. Although HiCoCS is a Hyperledger Fabric-oriented permissioned blockchain sharding scheme, it can also be extended to permissionless blockchains with simple modifications (see Section \ref{App-A3}).

\begin{table}[t]
	\centering
	\caption{Feature Comparison with Existing Sharding Solutions.}
	\vspace{-0.2cm}
	\resizebox{1\columnwidth}{!}{
		\begin{threeparttable}
			\begin{tabular}{lcccccc}
				\toprule
				\diagbox{Schemes}{Features$^*$} & \makecell{CCM} & \makecell{BT} & \makecell{Fabric- \\ oriented} & \makecell{HiCo \\ \textsf{CSTx}s} & \makecell{PP \\ \textsf{CSTx}s} & \makecell{BP \\ \textsf{CSTx}s} \\
				\midrule
				Ref. \cite{kokoris2018omniledger, zamani2018rapidchain} & 2PL & Pl & - & \ding{55}  & \ding{55} & \ding{55} \\
				Chainspace \cite{al2018chainspace} & 2PL & Pl & - & \ding{55}  & \ding{51} & \ding{55} \\
				Ref. \cite{tong2019hierarchical, dang2019towards, set2022service} & 2PL & Pd & \ding{51} & \ding{55}  & \ding{55} & \ding{55} \\
				Aeolus-Geth \cite{zheng2022aeolus} & 2PL & Pd & - & \ding{55}  & \ding{55} & \ding{55} \\
				Ref. \cite{wang2019monoxide, Zicong21, licochain} & OCC & Pl & - & \ding{55}  & \ding{55} & \ding{55} \\
				Ref. \cite{androulaki2018channels} & OCC & Pd & \ding{51} & \ding{55}  & \ding{55} & \ding{55} \\
				Meepo-OE \cite{zheng2022meepo} & OCC & Pd & - & \ding{55}  & \ding{55} & \ding{55} \\
				HiCoCS (This work) & CK & Pd/Pl & \ding{51} & \ding{51}  & \ding{51} & \ding{51} \\
				\bottomrule
			\end{tabular}
			\begin{tablenotes}
				\item[*] Abbreviation explanation: Concurrency Control Mechanism (CCM); Blockchain Type (BT); Highly Concurrent (HiCo); Privacy-Preserving (PP); Batch Processing (BP); Permissionless (Pl); Permissioned (Pd); Composite Key (CK). \ding{51}: Satisfied; \ding{55}: Unsatisfied; -: Inapplicable.
			\end{tablenotes}
		\end{threeparttable}
	}
	\label{features}
	\vspace{-0.5cm}
\end{table}

\section{Background and Preliminaries} \label{pre}
\subsection{EOV Transaction Processing Architecture} \label{App-A1}
Unlike the order-execute-validate (OEV) transaction processing flow adopted by permissionless blockchains (e.g., Bitcoin and Ethereum), permissioned blockchains (e.g., Hyperledger Fabric) use an execute-order-validate (EOV) architecture. Fig. \ref{2} shows the transaction processing in Fabric: \ding{202} Clients initiate transaction proposals. \ding{203} Peer nodes simulate the execution of transactions (endorsements). \ding{204} Clients compare the endorsement results. \ding{205} The endorsement results are sent to the ordering service as transaction messages. \ding{206} The ordering nodes order and pack the transactions into blocks. \ding{207} The new blocks are broadcasted and validated in the peer cluster. \ding{208} The peer nodes append the block to its channel's ledger.

Transactions can be executed concurrently on multiple nodes during the simulation stage, while the correct order of transaction submission is determined in the ordering and validation stages. This approach performs well with few read/write conflicts and low transaction submission latency.

\begin{figure}[t]
	\centering
	\includegraphics[width=3in]{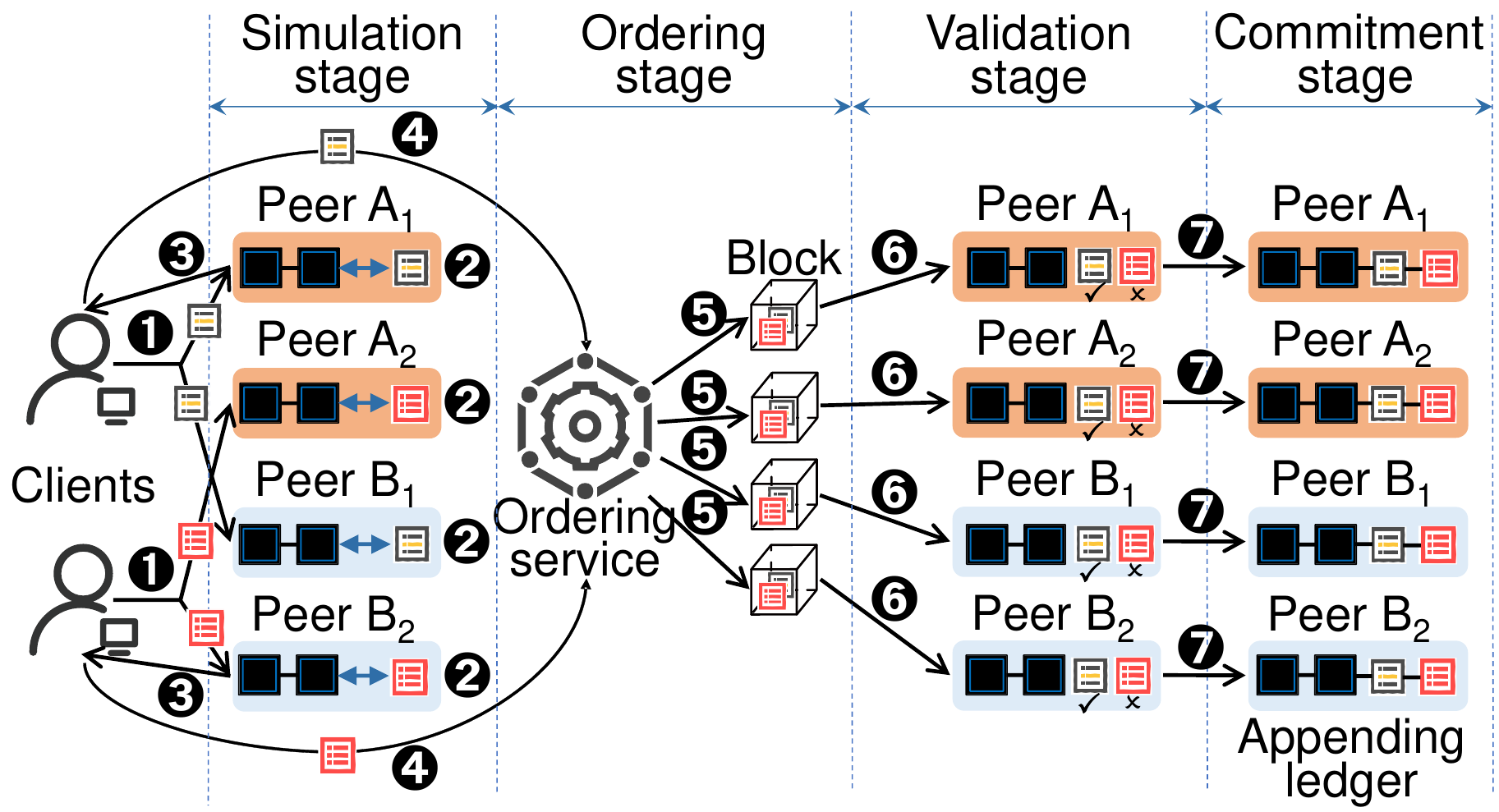}
	\vspace{-0.3cm}
	\caption{Hyperledger Fabric's transaction processing flow.}
	\label{2}
	\vspace{-0.5cm}
\end{figure}

\subsection{Multi-Version Concurrency Control (MVCC)}
In Hyperledger Fabric, MVCC defines the version number as the height of a transaction and stores it in the world state along with the key-value pair. The version of a key is contained only in the read set, which is used to check the validity of a transaction. The write set is used to update the value corresponding to a key, which is incremented after each successful update. This ensures transaction isolation; read and write operations do not interfere with each other, and conflict detection occurs only at the commitment stage. If the key version in the read set matches that in the world state, the transaction is valid; otherwise, it is considered an MVCC conflict.

Specific conflict cases include: (i) \textit{Write-write conflict.} If two concurrent transactions attempt to modify the same data item, only the first committed transaction succeeds, and the others abort due to the conflict. Because the data operated on by each transaction is a ``snapshot", concurrent transactions cannot see the changes made by other transactions and may make decisions based on outdated data. (ii) \textit{Repeatable read problem.} If a transaction reads a data item and then another transaction modifies it, the data has changed when the original transaction reads it again, which may lead to an abort because it needs to make a decision based on the latest data. (iii) \textit{Validation phase abort.} In MVCC-enabled systems (e.g., Oracle database, Ethereum blockchain, and other OCC policy-enabled systems), conflict detection is performed when the transaction commits, and the transaction is aborted if a conflict is found.

\subsection{Composite Key in Hyperledger Fabric}
The composite key in Hyperledger Fabric is a data structure used in state databases (e.g., CouchDB or LevelDB). It consists of multiple attributes/fields combined into a unique key. It supports efficient retrieval based on different combinations of attributes, enhancing the flexibility and complexity of business logic \cite{Sismanis2006}. The main steps of composite key usage in Fabric include: (i) \textit{Creation.} Call the \verb|CreateCompositeKey()| function of the chaincode API\footnote{https://hyperledger.github.io/fabric-chaincode-node/release-2.4/api/fabric-shim.ChaincodeStub.html} to concatenate the prefix and attribute array to generate a composite key. (ii) \textit{Store.} Call the \verb|PutState()| function to store the composite key and the corresponding value in the database. (iii) \textit{Query.} Call the \verb|GetStateByPartialCompositeKey()| function to query the data in the database based on a given partial composite key.

Composite keys are mainly used in Fabric for chaincode querying and logical indexing of state data, without directly affecting the physical storage structure of the state database. This provides an opportunity to construct composite keys for intermediaries to process transactions as virtual sub-brokers and share the pressure of \textsf{CSTx} conflicts when realizing Fabric's high concurrency cross-sharding.

\subsection{Homomorphic Encryption Techniques}
Homomorphic encryption is a public key encryption algorithm designed to perform data operations without leaking private data. Based on their arithmetic capabilities, homomorphic encryption schemes are classified into three categories \cite{acar2018survey}: (i) Partially homomorphic encryption (PHE) \cite{yaji2018privacy}; (ii) Somewhat homomorphic encryption (SHE) \cite{shoukry2016privacy}; (iii) Fully homomorphic encryption (FHE) \cite{lee2022privacy}. PHE supports only one of additive or multiplicative homomorphic operations, and the number of operations is unlimited. SHE supports both additive and multiplicative homomorphisms with a limited number of operations. FHE supports not only additive and multiplicative homomorphisms but also an unlimited number of operations.

In the vanilla version of Fabric's cross-sharding, a semi-trusted intermediary can batch process non-integer cross-shard assets, thus requiring privacy computations that support addition and multiplication. This paper uses an approximate FHE algorithm, i.e., the Cheon-Kim-Kim-Song (CKKS) \cite{cheon2017homomorphic}, which supports high-performance fully homomorphic encryption operations on floating-point numbers.

\begin{figure}[t]
	\centering
	\includegraphics[width=3in]{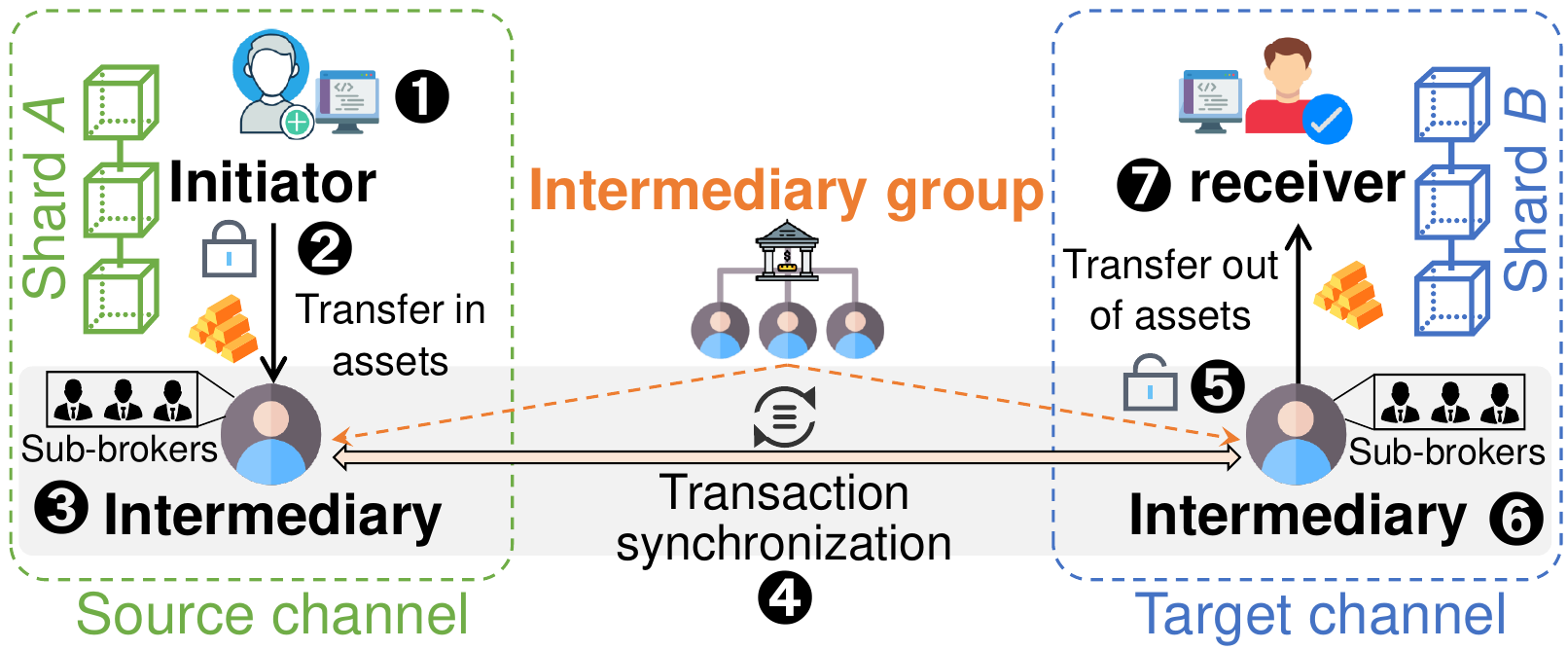}
	\vspace{-0.3cm}
	\caption{System overview of HiCoCS.}
	\label{3}
	\vspace{-0.3cm}
\end{figure}

\section{System Overview of HiCoCS} \label{solu}
\subsection{System Model} \label{System}
Fig. \ref{3} shows the overall interaction model of HiCoCS. The system roles include the transaction \textit{initiator} in the source channel, the \textit{intermediary group}, and the transaction \textit{receiver} in the target channel. Their capabilities are as follows.

\begin{itemize}[leftmargin=*]
	
	\item[$\bullet$] \textbf{Initiator.} As an initiator user of \textsf{CSTx}s, it first transfers assets in the source channel to an intermediary account.
	
	\item[$\bullet$] \textbf{Intermediary group.} It consists of multiple intermediary nodes, each of which should be satisfied with owning assets on both the source and target channels. An intermediary summarizes and processes multiple \textsf{CSTx}s in a unified manner to realize cross-shard asset transfers between the source and target channels, i.e., it receives the source channel's transactions and distributes them to the target channel. In the system's building block perspective, each intermediary performs  \textsf{CSTx} pre-processing (including the construction of \textit{composite keys/virtual sub-brokers}), incremental accumulation (calculated under ciphertext), transaction synchronization, and composite key reuse.
	
	\item[$\bullet$] \textbf{Receiver.} As a user on the target channel, it receives the assets from the intermediary for each \textsf{CSTx}.
\end{itemize}

The cross-shard trading process of HiCoCS is mainly divided into the following steps.

\begin{enumerate}[leftmargin=*]
	\item \textbf{\textsf{CSTx} initiation.} \ding{202} First, the initiator initiates an asset transfer request through its client. The transaction information contains the initiator's address (key), the receiver's address, and the transaction amount. \ding{203} Then, the initiator's \textsf{CSTx} message is encrypted and sent to an intermediary.
	
	\item \textbf{Intermediary processing.} \ding{204} The system transfers the initiator's transfer assets to an intermediary account (i.e., its virtual sub-broker accounts) in the source channel through the smart contract. \ding{205} The intermediary of the source channel collects the transaction ciphertexts, converts them into CKKS ciphertext vectors, and accumulates and summarizes the transaction ciphertexts. Finally, the target channel intermediary synchronizes the summarized results.
	
	\item \textbf{\textsf{CSTx} completion.} \ding{206} The source channel calls a decryption function of smart contract to restore the intermediary's cipher state summarized results to plaintext and complete the final accumulated amount to be written to the target channel's intermediary ledger. \ding{207} Then, the intermediary in the target channel first cross-channel queries the set of successfully received transactions on the source channel as its set of transactions to be transferred. \ding{208} Finally, it traverses the pending set and transfers the assets to the final target account in proportion to an exchange rate.
\end{enumerate}

\begin{figure}[t]
	\centering
	\includegraphics[width=3in]{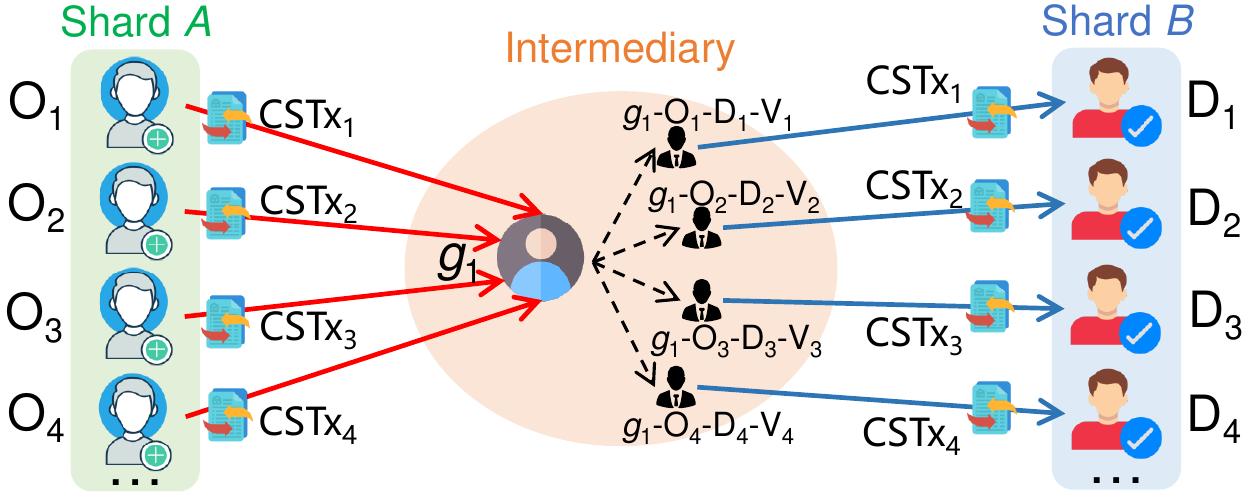}
	\vspace{-0.3cm}
	\caption{An example illustrating the HiCoCS.}
	\label{F12}
	\vspace{-0.3cm}
\end{figure}

We give a concise example to illustrate the core idea of HiCoCS in handling high concurrency \textsf{CSTx}s, as shown in Fig. \ref{F12}. Suppose there are multiple initiators $\{\textsf{O}_1, \textsf{O}_2, \textsf{O}_3, \textsf{O}_4\}$ initiating $\{\textsf{CSTx}_1, \textsf{CSTx}_2, \textsf{CSTx}_3, \textsf{CSTx}_4\}$ to the receivers $\{\textsf{D}_1, \textsf{D}_2, \textsf{D}_3, \textsf{D}_4\}$ respectively in a short period of time. These \textsf{CSTx}s are all relayed and handled by an intermediary $g_1$. In the vanilla version, most of the \textsf{CSTx}s will be aborted due to MVCC conflicts, as the ledger state of $g_1$ is read and written multiple times. However, in HiCoCS, the intermediary generates a virtual sub-broker for each \textsf{CSTx}, i.e., invokes the smart contract to quickly generate the composite keys $\{g_1-\textsf{O}_i-\textsf{D}_i-\textsf{V}_i\}|_{i=1}^{4}$. The $\textsf{V}_i $ is the amount's ciphertext for each \textsf{CSTx}. Since multiple virtual sub-brokers are writing to the ledger, the MVCC conflicts of a single intermediary are effectively mitigated. The overhead of creating a composite key is much smaller than maintaining a new intermediary, thus significantly reducing the transaction cost for the users.

\textbf{Insight.} The core principle of HiCoCS support for high concurrency \textsf{CSTx} processing is to \textit{build multiple virtual sub-brokers through the composite keys to avoid concurrent conflicts and realize batch processing of transactions. Each intermediary then counts the final amount by querying the pool of composite key transactions, and multiple \textsf{CSTx}s are processed at once} (see Section \ref{Building}). Actually, the inspiration originally came from blockchain layer-2 techniques such as the payment channel network (PCN) \cite{yang2023optimal}.

\subsection{Threat Model} \label{Threat}
We follow Microsoft's STRIDE model \cite{shostack2008experiences} to comprehensively analyze the potential threats of semi-trusted intermediaries to \textsf{HiCoCS}. (i) \textit{Spoofing.} An attacker spoofs as an intermediary to perform \textsf{CSTx} processing to steal transaction data for additional profit. (ii) \textit{Tampering.} An intermediary may maliciously modify transaction data. (iii) \textit{Repudiation.} An intermediary denies specific operations performed in \textsf{CSTx}s, such as denying receipt of transferred assets from its initiator. (iv) \textit{Information disclosure.} An intermediary may expose the private data of the participants, such as transaction flows and asset holdings. (v) \textit{Denial of service.} An intermediary may actively or passively make \textsf{CSTx} services unavailable \cite{hayat2022ml}. (vi) \textit{Elevation of privilege.} An intermediary with limited privileges impersonates an intermediary with privileges to gain the ability to process \textsf{CSTx}s.

We state the following \textbf{assumptions}: The Hyperledger Fabric's channels (equivalent to shards in this paper) are secure and reliable. There is at least one honest intermediary in the intermediary group that joins the source and target channels to provide cross-shard service. The intermediary is usually well-funded to cope with concurrent \textsf{CSTx}s during a period (for potential liquidity issues, we discuss introducing a remedial mechanism to ensure the robustness of the system in Section \ref{Sync}). The public-private key distribution of the encryption and decryption functions is secure and does not involve key transmission security issues and quantum threats.

\subsection{System Goals} \label{Goal}
To meet the performance and privacy requirements for highly concurrent cross-shard transaction processing, HiCoCS needs to achieve the following goals.
\nopagebreak[4]
\begin{itemize}[leftmargin=*]
	\item[$\bullet$] \textbf{Robustness.} It allows large-scale \textsf{CSTx}s to be sent to the system simultaneously and ensures that \textsf{CSTx}s are properly received and processed with a high transaction success ratio.
	
	\item[$\bullet$] \textbf{Efficiency.} The system can process \textsf{CSTx}s efficiently and cost-effectively, i.e., it is characterized by high throughput, low confirmation latency, and low resource overhead.
	
	\item[$\bullet$] \textbf{Security.} The security of \textsf{CSTx}s includes (i) \textit{Data confidentiality.} The system ensures that the semi-trusted intermediaries cannot extract the transaction amount privacy. (ii) \textit{Transaction atomicity.} The system needs to satisfy the eventual atomicity of \textsf{CSTx}s. (iii) \textit{Service availability.} The system is able to provide continuous and effective service.
\end{itemize}

We emphasize that the \textit{focus} of HiCoCS is on achieving \textit{robustness} and \textit{efficiency}, and that privacy preservation in security is an optional feature provided. In this paper, no improvements are made to fully homomorphic cryptography (e.g., CKKS), as this is another work of independent interest, which we leave for future work.

\section{System Design of HiCoCS} \label{Building}
\subsection{Design Outline}
Fig. \ref{4} is the design outline of HiCoCS, which shows the system's building blocks, mainly including the \textit{transaction generation module}, \textit{pre-processing module}, \textit{incremental accumulation module}, \textit{transaction synchronization module}, and \textit{transaction completion module}. In addition, the \textit{composite key reuse mechanism} is crucial to reduce the resource overhead of HiCoCS. The \textit{ciphertext conversion and computation mechanisms} are the key to achieving privacy protection. Their main functionalities are summarized below.

\begin{figure}[t]
	\centering
	\includegraphics[width=3.3in]{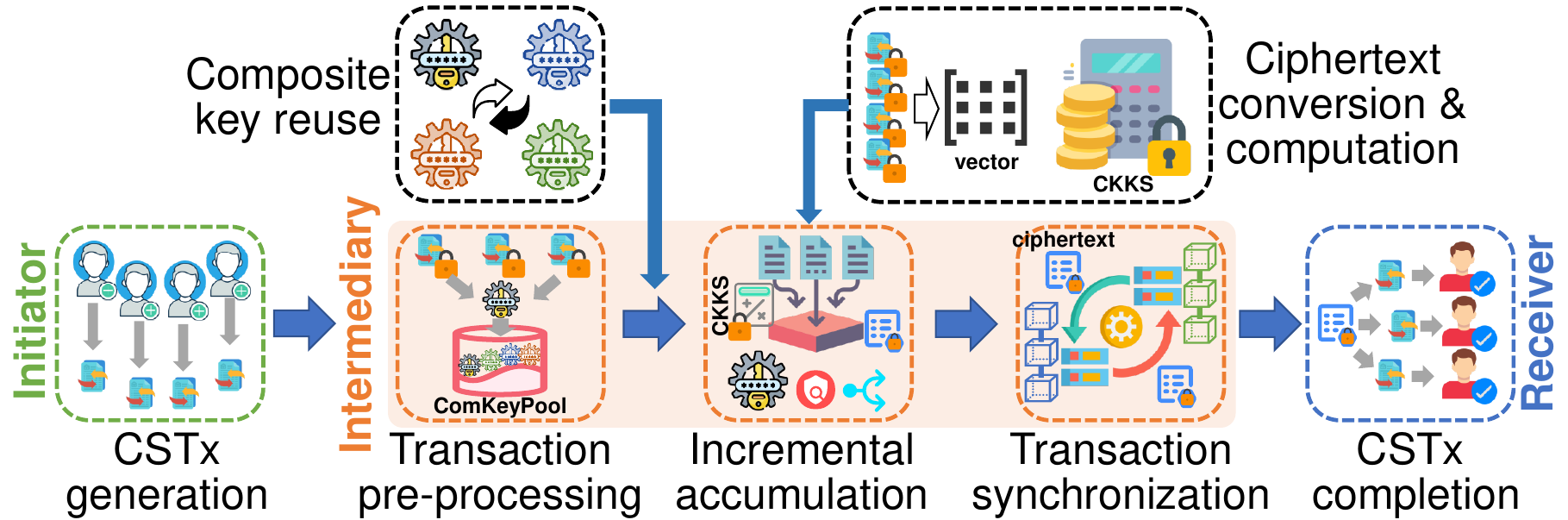}
	\vspace{-0.2cm}
	\caption{Building blocks of HiCoCS.}
	\label{4}
	\vspace{-0.5cm}
\end{figure}

\begin{itemize}[leftmargin=*]
	\item[$\bullet$] \textbf{Transaction generation module} is mainly responsible for packaging transactions. The packaged content includes the addresses of the transaction initiator and receiver, and the transaction inclusion (we omit its details).
	
	\item[$\bullet$] \textbf{Pre-processing module} mainly consists of collecting \textsf{CSTx} ciphertexts, constructing composite keys, and maintaining a composite key transaction pool \verb|ComKeyPool|.
	
	\item[$\bullet$] \textbf{Incremental accumulation module} adopts the idea of batch processing to summarize the intermediary-related \textsf{CSTx}s in the \verb|ComKeyPool|, including the operations of fuzzy query composite key, splitting composite key, and accumulating the final amount to be received under ciphertext.
	
	\item[$\bullet$] \textbf{Transaction synchronization module} requests the ciphertext processing result at the source channel from the incremental accumulation module via a cross-channel query, and synchronizes the pending \textsf{CSTx}s on the target channel with the intermediary's received \textsf{CSTx}s on the source channel.
	
	\item[$\bullet$] \textbf{Transaction completion module} distributes \textsf{CSTx}s to be transferred to accounts in the target shard (details omitted).
	
	\item[$\bullet$] \textbf{Composite key reuse mechanism} transforms multiple associated composite keys in \verb|ComKeyPool| into a single composite key, including composite key summarization and regeneration operations. It hands over the updated simplified \verb|ComKeyPool| to the incremental accumulation module.
	
	\item[$\bullet$] \textbf{Ciphertext conversion \& computation mechanisms} first convert the transaction ciphertext array into the CKKS ciphertext vectors for computation. Then, the ciphertext amounts are summarized (incrementally accumulated) to produce the ciphertext result, which is passed to the source channel for decryption and subsequent processing.
\end{itemize}

\subsection{Transaction Pre-Processing} \label{Pre}
Following the workflow, HiCoCS first encrypts the packaged \textsf{CSTx}s in the source channel to ensure the security of the \textsf{CSTx} transmission. Then, composite keys (virtual sub-brokers) are created for each intermediary under the \textsf{CSTx} ciphertext.

\textbf{Transmission pre-processing.} We use the advanced encryption standard (AES) algorithm \cite{selent2010advanced} to pre-process the \textsf{CSTx} messages. This serves two purposes: (i) The transaction initiator performs AES encryption of its initiated \textsf{CSTx} messages in the source channel, which ensures the security of \textsf{CSTx} message transmission. Of course, this requires ensuring the secure storage of keys. We utilize Fabric's private data collection to store the sender's AES key, and control that only authorized smart contract functions to access and use it for decryption. (ii) HiCoCS uses AES to convert packaged transaction messages (which may have inconsistent data lengths) into a uniformly formatted ciphertext (128-bit), which serves as the input to the CKKS algorithm used in the subsequent incremental accumulation module. This ensures a lightweight, encrypted transmission of \textsf{CSTx}s to the sharding network.

First, an initiator generates an AES symmetric key \verb|skey| for the encryption and decryption of each \textsf{CSTx} in the source channel. Then, for the transaction amount $\textsf{Amount}_i$, the initiator uses \verb|skey| to encrypt and obtain the \textsf{CSTx} ciphertext, i.e., $\textsf{AmountStrC}_i = \verb|EncryptAES(skey|, \textsf{Amount}_i\verb|)|$.

\begin{figure}[t]
	\centering
	\includegraphics[width=3.3in]{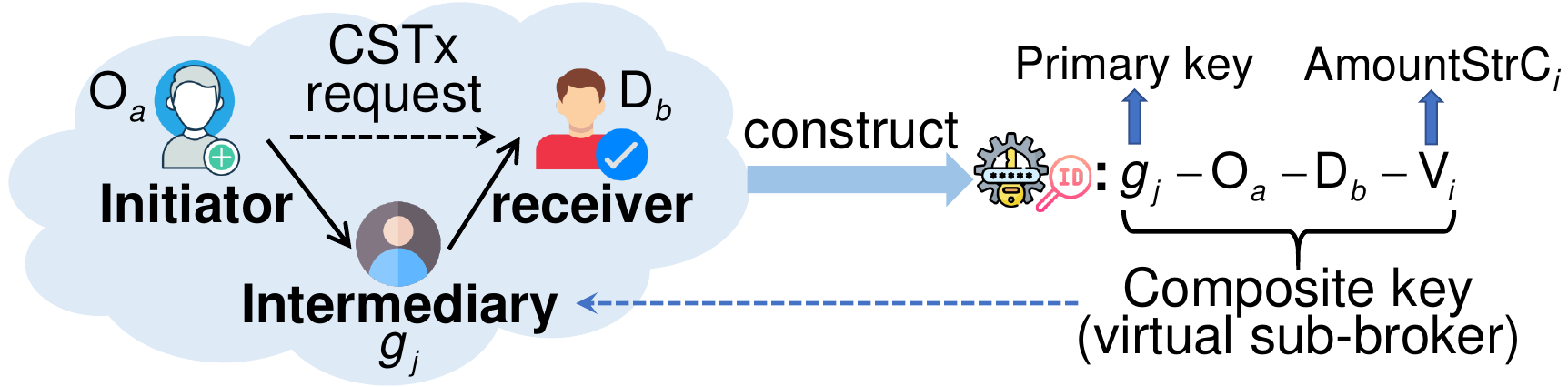}
	\vspace{-0.3cm}
	\caption{Composite key construction.}
	\label{4-2}
	\vspace{-0.5cm}
\end{figure}

\textbf{Composite key pre-processing.} To avoid concurrent \textsf{CSTx}s creating access conflicts to a single intermediary account/key, HiCoCS uses composite keys to combine multiple fields related to each \textsf{CSTx}. It constructs different composite keys as virtual sub-brokers of an intermediary to cache \textsf{CSTx}s, and adds each \textsf{CSTx} to a composite key transaction pool \verb|ComKeyPool|. Thus, it transforms the client's direct modification of the intermediary's state into the creation of composite keys, avoiding frequent access to a single key.

The construction of a composite key is shown schematically in Fig. \ref{4-2}. The intermediary group $\mathcal{G}$ contains $|\mathcal{G}|$ intermediaries, where each intermediary has an address/key of $g_j$. $\textsf{O}_a$ and $\textsf{D}_b$ denote the address/key of user $a$ on the source (original) channel and user $b$ on the target (destination) channel, respectively. Thus, a \textsf{CSTx} is denoted as $\textsf{CSTx}_i(\textsf{O}_a, \textsf{D}_b, g_j, \textsf{Amount}_i)$. The \textsf{CSTx} ciphertext message sent by the client via gRPC is $\textsf{CSTx}'_i(\textsf{O}_a, \textsf{D}_b, g_j, \textsf{AmountStrC}_i)$. The \textsf{CSTx} amount (value) ciphertext $\textsf{AmountStrC}_i$ is simplified to $\textsf{V}_i$. In addition, to ensure the uniqueness of the composite key, $\textsf{V}_i $ contains the timestamp of the transaction (for the case of multiple concurrent \textsf{CSTx}s with the same initiator, receiver, intermediary, and transaction amount). Thus, the pre-processing module constructs a composite key (virtual sub-broker) of $\textsf{CSTx}'_i$ as $\textsf{CK}_i: g_j - \textsf{O}_a - \textsf{D}_b - \textsf{V}_i$, and $g_j$ is used as the primary key of this composite key. The intermediary then adds $\textsf{CK}_i$ to the \verb|ComKeyPool| $\leftarrow \{ \textsf{CK}_1, \textsf{CK}_2, ..., \textsf{CK}_{i-1} \}$. When the time reaches the periodic settlement time threshold, i.e., $t = T_{\textsf{settle}}$, the intermediary group settles $i$ \textsf{CSTx}s corresponding to \verb|ComKeyPool| $\leftarrow \{ \textsf{CK}_1, \textsf{CK}_2, ..., \textsf{CK}_{i} \}$.

\textbf{Discussion.} We explore a potential double-spending attack and its corresponding countermeasure. Suppose an attacker initiates two \textsf{CSTx}s simultaneously, with a combined total amount exceeding its account balance, and these two \textsf{CSTx}s are processed by different intermediaries and packaged into separate transaction batches. During validation, only one of the \textsf{CSTx}s can succeed, while the other fails due to insufficient balance, leading to the rejection of the entire batch and negatively impacting other valid \textsf{CSTx}s. HiCoCS employs timestamps to ensure that even if two \textsf{CSTx}s are initiated simultaneously, they will have distinct timestamps corresponding to the composite key, causing the latter transaction in the queue to fail due to insufficient balance. Consequently, HiCoCS prevents this attack at its source. Even if an attack occurs, since the composite key contains the amount of transactions that failed due to malicious initiation, HiCoCS rolls back only the failed \textsf{CSTx} and a small number of dependent \textsf{CSTx}s, rather than the entire batch. This ensures that other valid \textsf{CSTx}s are completed successfully.

\subsection{Incremental Accumulation} \label{Incremental}
The process of transaction incremental accumulation under ciphertext is shown in Alg. \ref{alg-cstxAcc}.

\begin{algorithm}[h]
	{\footnotesize
		\DontPrintSemicolon
		\normalem 
		\caption{Incremental Accumulation Algorithm} \label{alg-cstxAcc}
			\KwInput{The intermediary's key $g_j$, a plaintext scaling factor $\Delta$, an exchange rate $\textsf{C}_\textsf{rate}$}
			\KwOutput{The final ciphertext summation result $\textsf{C}_\textsf{finalSum}$}
			\SetKwFunction{FcstxAcc}{\textsf{cstxAcc}}
			\SetKwProg{Fn}{Function}{:}{}
			\Fn{\FcstxAcc{$\cdot$}}{				
				\If{$t = T_{\textsf{settle}}$}{ \label{alg-cstxAcc-2}
					\tcp{Ciphertext collection}
					\tcp{The $\_$ symbol below denotes the prefix}
					\mbox{$\texttt{ComKeyPool}_j \leftarrow \texttt{GetStateByPartialCompositeKey}\texttt{(}\_, g_j\texttt{)}$} \\ \label{alg-cstxAcc-3}
					\For{$i = 0$ \KwTo $\lvert \textnormal{\texttt{ComKeyPool}}_j \rvert - 1$}{
						$\texttt{compositeKeyObject}_i \leftarrow \texttt{ComKeyPool}_{j,i}$ \\
						\tcp{Get compositeKey}
						$\textsf{CK}_{i} \leftarrow \texttt{compositeKeyObject}_i.\texttt{getKey()}$ \\
						\tcp{Split compositeKey}
						$\_, \textsf{D}_{i}, \textsf{V}_{i} \leftarrow \texttt{SplitCompositeKey(} \textsf{CK}_{i} \texttt{)}$ \\
						Add $\{\textsf{D}_{i}, \textsf{V}_{i}\}$ to the pending transaction set \\
						$\textsf{aesStrCipher}.\texttt{add(}\textsf{V}_{i}\texttt{)}$ \\ \label{alg-cstxAcc-9}
					}
					\tcp{Ciphertext conversion}
					\For{$i = 0$ \KwTo $\lvert \textnormal{\textsf{aesStrCipher}} \rvert - 1$}{ \label{alg-cstxAcc-10}
						\tcp{Call the \texttt{convert()} interface}
						$\textsf{Amount}_i \leftarrow \texttt{DecryptAES(skey}, \textsf{V}_i\texttt{)}$ \\ \label{alg-cstxAcc-11}
						$\textsf{Amounts}[i] \leftarrow \texttt{complex(}\textsf{Amount}_i, 0\texttt{)}$ \\\label{alg-cstxAcc-12}
						$\textsf{Amounts}.\texttt{add(}\textsf{Amounts}[i]\texttt{)}$ \\
					}
					$\textsf{m}(\textsf{X}) \leftarrow  \texttt{Encode(}\textsf{Amounts}, \Delta\texttt{)}$ \\
					$\texttt{(pk, sk)} \leftarrow \texttt{NewKeyPair()}$ \\
					$\textsf{CKKSCipher} \leftarrow \texttt{EncryptCKKS(pk}, \textsf{m}(\textsf{X})\texttt{)}$ \\ \label{alg-cstxAcc-16}
					\tcp{Ciphertext computation}
					$\texttt{rlk} \leftarrow \texttt{GenRelinearizationKey()}$ \\ \label{alg-cstxAcc-17}
					$\texttt{gk} \leftarrow \texttt{GenGaloisKey()}$ \\ \label{alg-cstxAcc-18}
					$\texttt{evaluator} \leftarrow \texttt{NewEvaluator()}$ \\ \label{alg-cstxAcc-19}
					$\texttt{RotationKey} \leftarrow \texttt{GenRotationKeys(gk, sk)}$ \\ \label{alg-cstxAcc-20}
					$\texttt{eval} \leftarrow \texttt{evaluator.WithKey(RotationKey, rlk)}$ \\ \label{alg-cstxAcc-21}
					\For{$i = 0$ \KwTo $\lceil \textnormal{\texttt{batch}}/n \rceil - 1$}{ \label{alg-cstxAcc-22}
						$\texttt{eval.InnerSum(}\textsf{CKKSCipher}, \texttt{batch}, n, \textsf{CKKSCipher}\texttt{)}$ \\ \label{alg-cstxAcc-23}
					}
					$\textsf{C}_\textsf{sum} \leftarrow \textsf{CKKSCipher}[0]$ \\ \label{alg-cstxAcc-24}
					\tcp{Asset exchange}
					$\textsf{C}_\textsf{finalSum} \leftarrow \texttt{evaluator.MulNew(}\textsf{C}_\textsf{sum}, \textsf{C}_\textsf{rate}\texttt{)}$ \\ \label{alg-cstxAcc-25}
				}
				\textbf{return} $\textsf{C}_\textsf{finalSum}$
			}	
		}
\end{algorithm}

\textbf{Amount ciphertext collection.} (Lines \ref{alg-cstxAcc-2}-\ref{alg-cstxAcc-9}) If the current time $t = T_{\textsf{settle}}$, the intermediary starts to settle the $\textsf{CSTx}$s accumulated in the last period. First, a fuzzy query is performed on the \verb|ComKeyPool| based on each intermediary's key $g_j$ (primary key). As a result, the system obtains a $\textsf{CSTx}$ set associated with the intermediary $g_j$, i.e., $\verb|ComKeyPool|_{j} \leftarrow \{ \textsf{CK}_1, \textsf{CK}_2, ..., \textsf{CK}_{i} \}$. Then, the intermediary traverses $\verb|ComKeyPool|_{j}$ and calls Fabric's composite key splitting function $\verb|SplitCompositeKey()|$ to obtain the set of \textsf{CSTx} amount ciphertexts $\textsf{aesStrCipher}=\{ \textsf{V}_1, \textsf{V}_2, ..., \textsf{V}_n \}$. Also, it adds $\{\textsf{D}_{i}, \textsf{V}_{i}\}$ to the pending transaction set.

\textbf{Ciphertext conversion.} (Lines \ref{alg-cstxAcc-10}-\ref{alg-cstxAcc-16}) Next, the system performs a ciphertext conversion on $\textsf{aesStrCipher}$ for CKKS homomorphic privacy calculation of the accumulated $\textsf{CSTx}$s’ amount, as shown in Fig. \ref{4-3}. First, the intermediary traverses \textsf{aesStrCipher} and calls the AES decryption function to decrypt each $\textsf{V}_i$ to get the amount of each \textsf{CSTx}, i.e., line \ref{alg-cstxAcc-11}. Note that \verb|skey| is not leaked here because \verb|skey| is stored within the initiator's private data collection, and the intermediary calls a ciphertext conversion interface \verb|convert()| to get a CKKS ciphertext vector, i.e., $\textsf{Amount}_i$ is a computed intermediate value, which is not available to the intermediary. Since CKKS is represented in complex space, if an operation is performed on a real number, the imaginary part of the complex number needs to be set to zero first, i.e., line \ref{alg-cstxAcc-12}. Then, the array ($\textsf{Amounts}$) is encoded as a CKKS plaintext $\textsf{m}(\textsf{X})$ in integer form, i.e., $\textsf{m}(\textsf{X}) = \textsf{Amounts} \cdot \Delta$, where $\Delta$ is a plaintext scaling factor and $\Delta > 0$. Then, the system generates a public-private key pair $(\verb|pk|, \verb|sk|)$ (\verb|sk| in the privacy data collection) required for CKKS in the source channel. It encrypts the plaintext $\textsf{m}(\textsf{X})$ into ciphertext, i.e., line \ref{alg-cstxAcc-16}. It satisfies: (i) $\textsf{CKKSCipher}=(\textsf{c}_0+\textsf{m}, \textsf{c}_1) \in R^2_\textsf{Q}$, where $(\textsf{c}_0, \textsf{c}_1)$ is the randomized instance of ring learning with errors (RLWE).  \textsf{Q} is the maximum ciphertext modulus. Also, (ii) $\textsf{c}_0 + \textsf{c}_1 \cdot \verb|sk| = \textsf{e}$, where \textsf{e} is the noise.

\begin{figure}[t]
	\centering
	\includegraphics[width=3.3in]{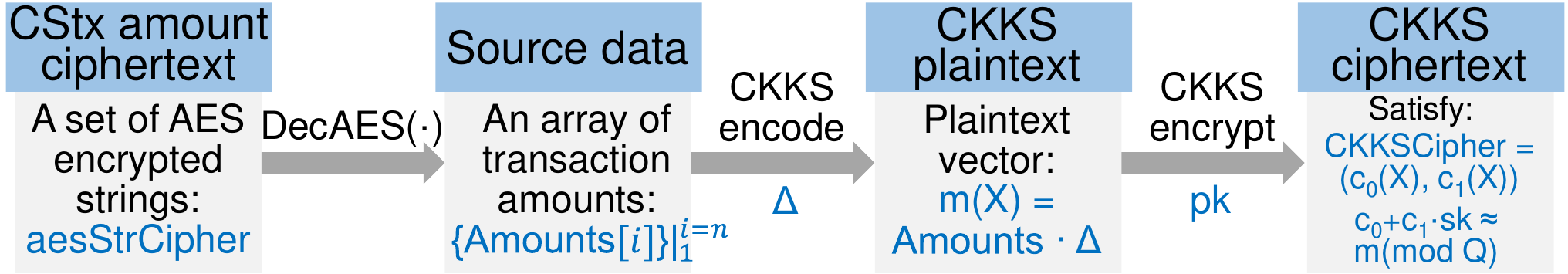}
	\vspace{-0.2cm}
	\caption{Ciphertext conversion process.}
	\label{4-3}
	\vspace{-0.3cm}
\end{figure}

\textbf{Ciphertext computation.} (Lines \ref{alg-cstxAcc-17}-\ref{alg-cstxAcc-24}) After obtaining \textsf{CKKSCipher}, the system computes the accumulated amount of the \textsf{CSTx}s in ciphertext, i.e., sums up the components of the ciphertext vector ($c_i$). We realize the incremental accumulation of the \textsf{CSTx} amount through the ciphertext rotation operation in the following form:

{\small
\begin{equation}
	\begin{aligned}
		\textsf{E}(c_0, c_1, ..., c_{n-1}) \xrightarrow{\rm{Rotate} \  \textit{k} \ vectors} \textsf{E}(c_k, ..., c_{n-1}, c_0, ..., c_{k-1}).
	\end{aligned}
	\nonumber 
\end{equation}
}

Let the number of components in the ciphertext vector be $n$ and the ciphertext group size be \verb|batch|. First, initialize the parameters, including a relinearization key \verb|rlk| (line \ref{alg-cstxAcc-17}) and a Galois key \verb|gk| (line \ref{alg-cstxAcc-18}). Then, a ciphertext evaluator is created, i.e., line \ref{alg-cstxAcc-19}. The private key \verb|sk| and the Galois key \verb|gk| are utilized to generate a key for internal rotation \verb|RotationKey| (line \ref{alg-cstxAcc-20}). Then, create a shallow copy of the \verb|evaluator|, i.e., line \ref{alg-cstxAcc-21}. \verb|eval| is a new \verb|evaluationKey| and shares the temporary buffer with the previous one. The \verb|batch| components of the ciphertext vector are then rotated $\lceil \verb|batch|/n \rceil$ times in groups of $n$, each time evaluating the group's inner sum (calling \verb|evaluator.InnerSum()|), and finally, the groups are summed (lines \ref{alg-cstxAcc-22}-\ref{alg-cstxAcc-23}). The final value of all ciphertext slots is an inner sum, and we take the first slot returned as the sum of the transaction ciphertexts, i.e., $\textsf{C}_\textsf{sum}$ (line \ref{alg-cstxAcc-24}).

We emphasize that the above ciphertext conversion and computation are performed in the source channel. Thus, no transaction amount privacy is leaked by intermediaries.

\textbf{Confidential asset exchange.} (Line \ref{alg-cstxAcc-25}) Finally, since the asset values of the two shards may be different, i.e., there exists an exchange rate $\textsf{C}_\textsf{rate}$. To prevent an attacker from deriving the value of the two assets through the exchange rate, it is also necessary to encrypt the exchange rate into a ciphertext. Thus, we utilize ciphertext multiplication to achieve asset cross-shard exchange, i.e., the final ciphertext summation result is $\textsf{C}_\textsf{finalSum} = \verb|evaluator.MulNew(|\textsf{C}_\textsf{sum}, \textsf{C}_\textsf{rate}\verb|)|$.

\textbf{Discussion.} HiCoCS uses homomorphic encryption techniques to preserve the privacy of \textsf{CSTx}s. It focuses only on the privacy of the data and not on the identity of the participants. If participants are concerned about the privacy of their identities, they may consider obfuscating the addresses before the transactions \cite{wu2021towards}. As it is not the focus of this paper, we only briefly describe this approach here. When users need to protect their identity privacy, HiCoCS provides an optional mixing service for identity obfuscation. This service can be provided by an organization consisting of multiple intermediaries, who split each transaction served by each of them into multiple sub-transactions to be mixed within the organization, and ultimately, multiple intermediaries transfer funds to the receiver in multiple passes. The level of transaction obfuscation is correlated with an additional fee paid by the user. In any case, compared to the vanilla version, our approach still has the advantage in guaranteeing the confidentiality of the original transaction data. The intermediaries perform the cumulative transaction amount computation in the ciphertext space, which can effectively mitigate the threats stated in \S \ref{Threat}.

\subsection{Transaction Synchronization and Completion} \label{Sync}
After obtaining the CKKS ciphertext processing results, the source and target channels/shards synchronize that cipher state results, including the original and exchanged results, i.e.,  $\textsf{C}_\textsf{sum}$ and $\textsf{C}_\textsf{finalSum}$. Then, they call the \verb|DecryptCKKS()| function to decrypt the two ciphertext results and decode them in plaintext space, respectively. The initiator of the source channel gets the decryption results and records them in its ledger. The plaintext code for the amount to be transferred to the receiver of the target channel is as follows

{\small
\begin{equation}
	\begin{aligned}
		\textsf{inAmount} = \verb|encoder.Decode(|\verb|DecryptCKKS(|\verb|sk|, \textsf{C}_\textsf{finalSum}\verb|))|.
	\end{aligned}
	\nonumber 
\end{equation}
}

The plaintext code for the amount ultimately to be deducted by the intermediary is as follows

{\small
\begin{equation}
	\begin{aligned}
		\textsf{outAmount} = \verb|encoder.Decode(|\verb|DecryptCKKS(|\verb|sk|, \textsf{C}_\textsf{sum}\verb|))|.
	\end{aligned}
	\nonumber 
\end{equation}
}

Finally, since the decoded plaintext is in the form of a complex array, we only need to obtain its real part. The system updates the ledger states of the receivers and intermediaries based on the pending transaction set. Eventually, the \textsf{CSTx}s are completed.

\textbf{Discussion.} If affected by liquidity issues, the intermediary accounts may not have sufficient funds to complete the current period of \textsf{CSTx}s. To enhance the system's robustness, we propose introducing a remedial mechanism to address this issue. We adopt the concepts of real-time monitoring via smart contracts and the joint maintenance of liquidity pools \cite{zheng2018detailed, jeong2023efficient} to ensure the stability and continuity of cross-shard fund transfers under suboptimal conditions in HiCoCS. We briefly describe the multidimensional components of the mechanism: (i) \textit{Dynamic fund management.} By running real-time monitoring algorithms on smart contracts, the system adjusts the intermediary's pool configuration to replenish funds automatically. (ii) \textit{Distributed liquidity pool.} Establish a liquidity pool jointly maintained by multiple intermediaries to share the funding pressure caused by highly concurrent transactions. (iii) \textit{Prioritized processing mechanism.} In the event of fund constraints, prioritize the processing of small amount transactions to improve the efficiency of fund usage.

\begin{figure}[t]
	\centering
	\includegraphics[width=3.1in]{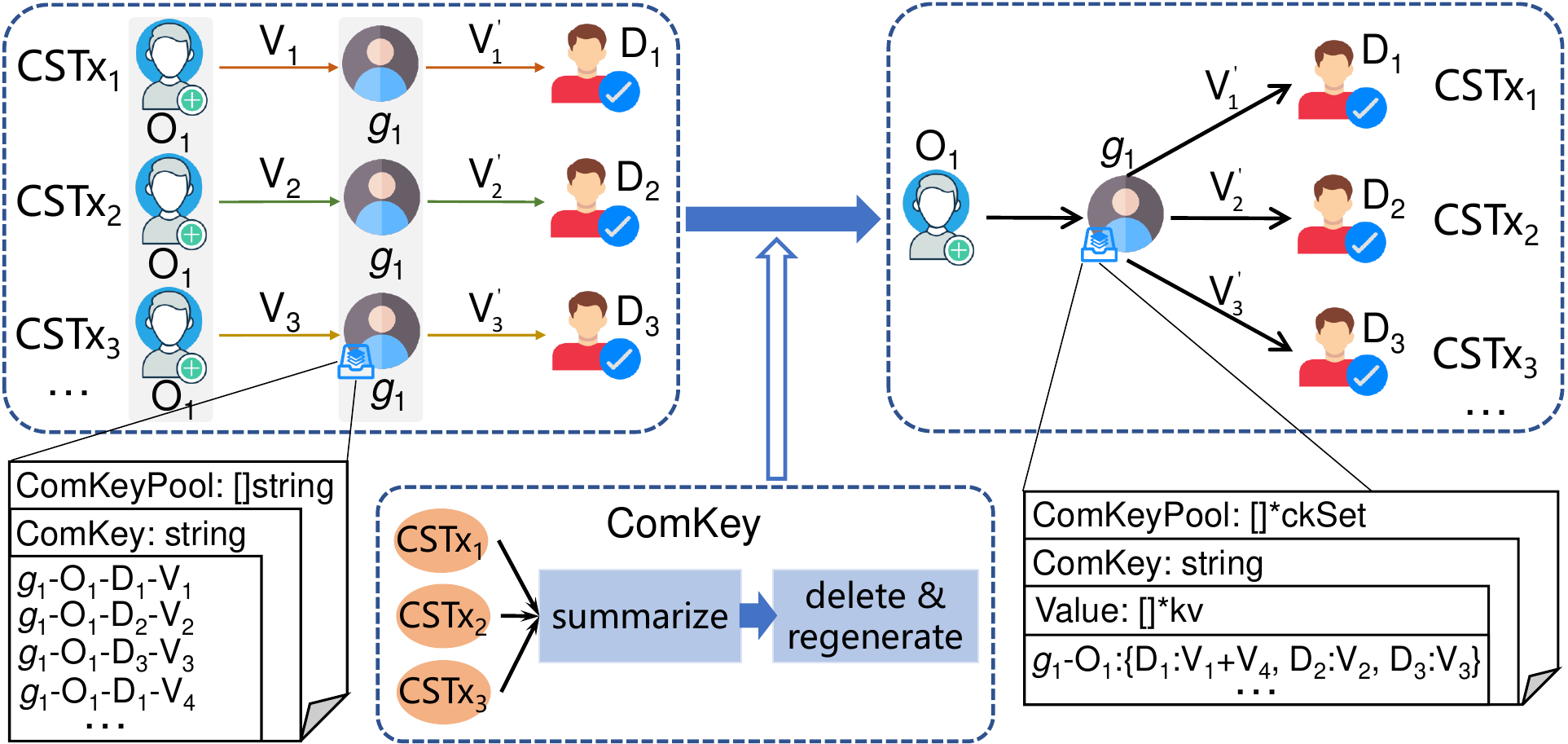}
	\vspace{-0.2cm}
	\caption{Composite key reuse.}
	\label{4-4}
	\vspace{-0.3cm}
\end{figure}

\subsection{Composite Key Reuse} \label{Reuse}
We adopt an idea of Proof-of-Equivalence (PoE) \cite{chatzopoulos2020mneme}. It periodically summarizes data and generates equivalent blocks that require less storage space, which significantly saves resources. We implement a periodically running Composite Key Proof of Equivalence (CKPoE) protocol in the composite key reuse mechanism. It consists of composite key summarization and regeneration phases.

A randomly selected intermediary $g_j \in \mathcal{G}$ in the intermediary group $\mathcal{G}$ stores the new set of composite keys, combines the composite keys generated from \textsf{CSTx}s in which the participants are (partially) the same, and then regenerates their equivalent composite keys. For example, in Fig. \ref{4-4}, assume that the composite key transaction pool (\verb|ComKeyPool|) (left) has four composite keys. $\textsf{CK}_1: g_1 - \textsf{O}_1 - \textsf{D}_1 - \textsf{V}_1$, $\textsf{CK}_2: g_1 - \textsf{O}_1 - \textsf{D}_2 - \textsf{V}_2$, $\textsf{CK}_3: g_1 - \textsf{O}_1 - \textsf{D}_3 - \textsf{V}_3$, and $\textsf{CK}_4: g_1 - \textsf{O}_1 - \textsf{D}_1 - \textsf{V}_4$. Actually, the initiator $\textsf{O}_1$ transfers a total amount of $\textsf{V}_1 + \textsf{V}_2 + \textsf{V}_3 + \textsf{V}_4$ to the intermediary $g_j$, where a total of $\textsf{V}_1+\textsf{V}_4$ needs to be transferred to the receiver $\textsf{D}_1$. Thus, we create a new \verb|ComKeyPool| (right) of pending $\textsf{CSTx}$s, which deletes the original 4 composite keys and adds a new composite key $g_1 - \textsf{O}_1:\{ \textsf{D}_1: \textsf{V}_1+\textsf{V}_4, \textsf{D}_2: \textsf{V}_2, \textsf{D}_3: \textsf{V}_3 \}$.

\begin{algorithm}[h]
	{\footnotesize
		\DontPrintSemicolon
		\normalem 
		\caption{Composite Key Reuse Algorithm} \label{alg-ckReuse}
			\KwInput{The current composite key transaction pool \texttt{ComKeyPool}, the intermediary group $\mathcal{G}$}
			\KwOutput{The equivalent composite key transaction pool after reusing updates \texttt{ComKeyPool}}
			\SetKwFunction{FckReuse}{\textsf{ckReuse}}
			\SetKwProg{Fn}{Function}{:}{}
			\Fn{\FckReuse{$\cdot$}}{
				\For{$j = 0$ \KwTo $\lvert \mathcal{G} \rvert - 1$}{ \label{alg-ckReuse-2}
					\tcp{Composite key summarization}
					\tcp{Query the pending transaction set}
					\mbox{$\texttt{CKSet} \leftarrow \texttt{queryCompositeKeyByPartial(}g_j, \texttt{ComKeyPool)}$} \\ \label{alg-ckReuse-3}
					$\texttt{ComKeyMap} = \emptyset$ \\
					\For{$i = 0$ \KwTo $\lvert \textnormal{\texttt{CKSet}} \rvert - 1$}{
						$\_, \textsf{O}_{i}, \textsf{D}_{i}, \textsf{V}_{i} \leftarrow \texttt{SplitCompositeKey(} \textsf{CK}_{i} \texttt{)}$ \\
						$\textsf{CKNew}_{i} \leftarrow g_j \sim \textsf{O}_{i}$ \tcp*{\mbox{$\sim:$ A concatenator}}  \label{alg-ckReuse-7}
						\If{\textnormal{$\texttt{IsExist(}\textsf{D}_{i}, \texttt{ComKeyMap}\texttt{)}$}}{ 
							$\texttt{ComKeyMap}[\textsf{D}_{i}] += \textsf{V}_{i}$ \\
						}
						\Else{
							$\texttt{ComKeyMap}[\textsf{D}_{i}] = \textsf{V}_{i}$ \\
						}\label{alg-ckReuse-11}
						\tcp{Construct new composite keys}
						$\texttt{TempCK}_j \leftarrow \texttt{createCKNew(}\textsf{CKNew}_{i}, \texttt{ComKeyMap)}$ \\ \label{alg-ckReuse-12}
						$\texttt{TempComKeyPool.add(TempCK}_j\texttt{)}$ \\ \label{alg-ckReuse-13}
					}
					\tcp{Composite key regeneration}
					\For{$i = 0$ \KwTo $\lvert \textnormal{\texttt{ComKeyPool}} \rvert - 1$}{ \label{alg-ckReuse-14}
						$\texttt{delete(}\textsf{CK}_{i}\texttt{)}$ \\
					}
					$\texttt{ComKeyPool} \leftarrow \texttt{TempComKeyPool}$ \\
					$\texttt{destory(TempComKeyPool)}$ \\ \label{alg-ckReuse-16}
				}
				
				\textbf{return} $\texttt{ComKeyPool}$
			}
			
		}
\end{algorithm}

We briefly describe the CKPoE protocol process in Alg. \ref{alg-ckReuse}.

\textbf{Composite key summarization.} (Lines \ref{alg-ckReuse-2}-\ref{alg-ckReuse-13}) The periodically elected intermediary $g_j$ listens to the composite keys added to \verb|ComKeyPool|, checks all the accounts involved, and summarizes the composite keys with the same participants and their amounts. The idea is to count the final execution results of multiple transactions at once and then record the equivalent transaction results. First, the intermediary group $\mathcal{G}$ is traversed to obtain the key of each intermediary. Then, the pending transaction set (\verb|CKSet|) associated with each intermediary is queried based on its key. A map set (\verb|ComKeyMap|) is constructed to store the new composite keys. Next, the \verb|CKSet| is traversed, and each composite key is split to obtain its original account's key, the destination account's key, and the transaction amount. Then (lines \ref{alg-ckReuse-7}-\ref{alg-ckReuse-11}), the original account's key and the intermediary's key are combined as a new composite key. Meanwhile, the total amount of the destination account is counted under the ciphertext and used as the value of the new composite key. Finally (lines \ref{alg-ckReuse-12}-\ref{alg-ckReuse-13}), the new composite key is constructed to append a temporary composite key transaction pool (\verb|TempComKeyPool|).

\textbf{Composite key regeneration.} (Lines \ref{alg-ckReuse-14}-\ref{alg-ckReuse-16}) Intermediary $g_j$ updates \verb|ComKeyPool| based on the result of the summarization phase, i.e., \verb|TempComKeyPool|. After validation by other intermediaries, the new \verb|ComKeyPool| is broadcast. Each user who receives it can delete the previously outdated \verb|ComKeyPool| and update it to the latest equivalent.

{
	\setcounter{algocf}{0} 
	\SetAlgorithmName{{\small \texttt{Solidity code}}}{local}{List of Local Algorithms}
	\begin{algorithm}[h]
		\DontPrintSemicolon
		\normalem 
		\caption{{\small Composite Key Implementation Example}} \label{alg1}
		{\footnotesize
			\tcp{CSTx structure}
			\verb|struct| CSTx \{ \\ \label{l1}
			\mbox{} \qquad \verb|address| intermediary; \\
			\mbox{} \qquad \verb|address| initiator; \\
			\mbox{} \qquad \verb|address| receiver; \\
			\mbox{} \qquad \verb|string| amountC; \tcp*{Amount ciphertext}
			\} \\ \label{l6}
			\tcp{Data mapping}
			\verb|mapping(string| $\Rightarrow$ CSTx\verb|)| \verb|private| cstxs; \\ \label{l7}
			\tcp{Composite key creation}
			\verb|function| \textsf{createCompositeKey}\verb|(string memory| objectType, \verb|address| intermediary, \verb|address| initiator, \verb|address| receiver, \verb|string memory| amountC\verb|)| \verb|public| \{ \\ \label{l8}
			\mbox{} \qquad \verb|string memory| CK = \verb|abi.encodePacked(|objectType, ``$|$", intermediary, ``$|$", initiator, ``$|$", receiver, ``$|$", amountC\verb|)|; \\\label{l10}
			\mbox{} \qquad cstxs[CK] = CSTx\verb|(|intermediary, initiator, receiver, amountC\verb|)|; \\
			\} \\ \label{l11}
			\tcp{Composite key query}
			\verb|function| \textsf{queryCompositeKey}\verb|(string memory| CK\verb|)| \verb|public view returns| \verb|(|CSTx \verb|memory|\verb|)| \{ \\  \label{l12}
			\mbox{} \qquad  \verb|require(|cstxs[CK].initiator != \verb|address(|0\verb|)|, ``\texttt{CSTx does not exist.}"\verb|)|; \\
			\mbox{} \qquad \verb|return| cstxs[CK]; \\
			\} \\ \label{l15}
		}
	\end{algorithm}
}

\subsection{Ideas for Extension to Other Blockchains} \label{App-A3}
Further, we discuss ideas for applying HiCoCS to other blockchains. While our main focus is on solving the problem in Hyperledger Fabric, it proposes a way to mitigate the problem of cross-shard transaction conflicts in permissioned blockchain systems. We also consider that similar techniques and approaches can be applied in other blockchain systems, and thus, HiCoCS has a certain degree of generalizability.

Rich Web3 applications need to be served by different types of blockchains. Our vision is to design HiCoCS as a generalized cross-sharding middleware. Thus, HiCoCS needs to provide an easily scalable paradigm. Though only Hyperledger Fabric's chaincode currently provides the APIs for composite keys, other blockchains can implement similar functionality. As long as the blockchain smart contract can implement the composite key, then HiCoCS can be deployed on it. For example, EEA \cite{eea} and ChainMaker \cite{chainmaker} are the mainstream enterprise-grade permissionless and permissioned blockchains, respectively. They both support the \verb|Solidity| language for developing smart contracts. We provide a simplified example, as shown in \verb|Solidity code| \ref{alg1}. First (lines \ref{l1}-\ref{l6}), the data structure associated with the composite key and \textsf{CSTx} is defined. We use a \verb|mapping| to store the data related to the composite key (line \ref{l7}). Then, insert the concatenator ``$|$" between each of the two attributes to create the composite key (line \ref{l10}). Finally (lines \ref{l12}-\ref{l15}), a query composite key function is shown.

\section{Security Analysis} \label{Security}
In this section, we briefly analyze the security of HiCoCS with respect to the stated threats in Section \ref{Threat} and the security goals in Section \ref{Goal}.

\subsection{Data Confidentiality}
\begin{thm} \label{thm1}
	In HiCoCS, for all cross-shard transactions, there is no probabilistic polynomial time (PPT) attacker $\mathcal{A}$ can spoof as an intermediary to successfully extract or infer transaction amount privacy information from the intercepted data.
\end{thm}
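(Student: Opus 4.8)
The plan is to cast Theorem~\ref{thm1} as an indistinguishability statement and prove it by reduction to the semantic security of the two primitives the scheme relies on: AES (used for transmission pre-processing) and CKKS (used for incremental accumulation under ciphertext). First I would fix the adversary's view precisely. A PPT attacker $\mathcal{A}$ that spoofs an intermediary can, at best, observe the data crossing the intermediary boundary: the gRPC-transmitted AES ciphertexts $\textsf{V}_i = \textsf{AmountStrC}_i$, together with the CKKS artefacts $\textsf{CKKSCipher}$, $\textsf{C}_\textsf{sum}$, $\textsf{C}_\textsf{finalSum}$ and the public evaluation keys. Crucially, by the assumptions in Section~\ref{Threat} the symmetric key \verb|skey| resides only in the initiator's private data collection, and the CKKS secret key \verb|sk| likewise; Fabric's MSP-based identity binding prevents a spoofing $\mathcal{A}$ from obtaining either key or from invoking the authorized decryption chaincode, so the intermediate value $\textsf{Amount}_i$ never enters $\mathcal{A}$'s view.

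Next I would define the confidentiality game: a challenger accepts two amount vectors $\textsf{Amounts}^{(0)}, \textsf{Amounts}^{(1)}$ from $\mathcal{A}$, flips a bit $b$, runs the real protocol on $\textsf{Amounts}^{(b)}$, and hands $\mathcal{A}$ the view above; $\mathcal{A}$ wins if it guesses $b$. Extracting or inferring any amount is then at least as hard as winning this game with non-negligible advantage. I would proceed by game hopping. In the first hop, replace every AES ciphertext $\textsf{V}_i$ with the encryption of a fixed dummy message; since \verb|skey| stays hidden and AES is IND-CPA secure (modelled as a PRP), the two games are computationally indistinguishable. In the second hop, exploiting the ciphertext format $\textsf{CKKSCipher} = (\textsf{c}_0 + \textsf{m}, \textsf{c}_1)$ with $(\textsf{c}_0,\textsf{c}_1)$ an RLWE instance, replace the encoded plaintext $\textsf{m}(\textsf{X})$ with $0$; the decisional RLWE assumption makes $(\textsf{c}_0,\textsf{c}_1)$ pseudorandom, rendering this hop indistinguishable as well. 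I would then observe that the homomorphic steps producing $\textsf{C}_\textsf{sum}$ and $\textsf{C}_\textsf{finalSum}$ (namely \verb|InnerSum| and \verb|MulNew|) are deterministic public functions of the challenge ciphertext, hence leak nothing beyond it. In the final game $\mathcal{A}$'s view is independent of $b$, so its advantage is exactly zero; summing the per-hop gaps bounds $\mathcal{A}$'s total advantage by the sum of the AES and RLWE distinguishing advantages, which is negligible.

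The main obstacle will be the CKKS layer rather than the AES layer. Because CKKS is approximate and the intermediary applies a chain of homomorphic rotations followed by a ciphertext multiplication, I must argue carefully that (i) the RLWE-based semantic security genuinely holds for the specific encrypt-then-evaluate ciphertexts exposed here, and (ii) no circuit-privacy property is required, since $\mathcal{A}$ never receives a decryption---only the source channel, holding \verb|sk| inside the private data collection, decrypts. I would also need to compose the two layers cleanly in the hybrids, ensuring the CKKS input remains well-defined once the AES ciphertexts are replaced, and to invoke the spoofing/authentication component (Fabric's MSP barring access to the private data collections) at exactly the point where the reduction requires the keys to stay hidden.
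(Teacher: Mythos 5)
Your proposal is correct in substance but takes a genuinely different, and considerably more formal, route than the paper. The paper's own proof is a short informal argument: it asserts that AES protects the transmitted messages because no unauthorized entity holds the key, that CKKS forces the intermediary to operate only in ciphertext space (citing the CKKS security properties), and that Fabric's private data collection keeps \verb|skey| and \verb|sk| out of reach even under passive attacks --- there is no security game, no reduction, and no hybrid argument. You instead cast the statement as an indistinguishability game and prove it by game hopping: one hop to the IND-CPA security of AES, one hop to decisional RLWE for the CKKS layer, plus the observation that \verb|InnerSum| and \verb|MulNew| are deterministic public functions of the challenge ciphertext and that no decryption oracle is exposed (so CKKS's approximate-decryption issues and circuit privacy are irrelevant). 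Both arguments ultimately rest on the same three pillars --- AES security, CKKS/RLWE security, and the private-data-collection key isolation --- so your reduction can be read as a rigorous elaboration of what the paper only asserts; what it buys is a precise adversarial view, a quantitative advantage bound, and an explicit identification of where the Fabric MSP/authorization assumption must be invoked. One caveat you share with the paper and might state explicitly: the decrypted aggregate \textsf{inAmount}/\textsf{outAmount} is eventually written to ledgers by design, so the confidentiality claim should be understood as ``nothing beyond the intended aggregate output,'' which your game would need to encode by requiring the two challenge vectors to have equal sums.
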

\begin{proof}
	HiCoCS uses the AES encryption algorithm to encrypt and pre-process the transmission of cross-shard transactions, and there is no unauthorized entity (including intermediaries) can decrypt the message without the key. Further, HiCoCS performs fully homomorphic encryption computation on the cross-shard transactions using the CKKS encryption algorithm. According to the security properties of the CKKS encryption algorithm \cite{cheon2017homomorphic}, the intermediary can only operate in the ciphertext space when processing transactions. Due to the use of Fabric's private data collection to manage the keys, even in the face of a passive attack \cite{li2021security}, attacker $\mathcal{A}$ cannot crack the private key by algebraic operations based on the hints and does not have the prerequisites to crack the private key. Thus, HiCoCS ensures that even in the existence of attacker $\mathcal{A}$, the semi-trusted intermediary cannot directly or indirectly obtain the plaintext transaction amount data.
\end{proof}

As HiCoCS provides data confidentiality, the threats of \textit{spoofing}, \textit{tampering}, \textit{information disclosure}, and \textit{elevation of privilege} are effectively defended against.

\subsection{Transaction Atomicity}
\begin{thm} \label{thm2}
	The cross-shard transactions in HiCoCS satisfy eventual atomicity, including the conditions that (i) if a cross-shard transaction is successfully executed, the states of the sharding ledgers involved in the transaction are consistent, and (ii) if a cross-shard transaction fails to execute, it must be rolled back on the involved shards.
\end{thm}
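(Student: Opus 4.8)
The plan is to decompose each \textsf{CSTx} into two single-channel sub-transactions and then lift atomicity from the shard level to the cross-shard level by case analysis on the point of failure. Concretely, a \textsf{CSTx} comprises a source-side step (the initiator's assets are escrowed under the intermediary's virtual sub-broker $\textsf{CK}_i$) and a target-side step (the intermediary credits the receiver $\textsf{D}_b$). Each step is a single Fabric transaction, so by the EOV architecture and MVCC validation of Section \ref{pre} it is already atomic within its shard: if its read-set versions match the world state at commit its write-set is applied in full, otherwise the entire write-set is discarded. I would record this per-shard atomicity as the base case, and then establish the structural fact that the target channel never acts spontaneously: in step \ding{207} it cross-channel queries only the set of \emph{successfully committed} source-side \textsf{CSTx}s before distributing assets. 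This causal coupling makes the target step strictly downstream of a committed source step and rules out the ``credit-without-debit'' inconsistency by construction.

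For part (i) I would argue that a successful \textsf{CSTx} leaves consistent ledgers. Success means both sub-transactions committed: the source ledger records the debit and escrow of $\textsf{Amount}_i$ under $\textsf{CK}_i$, and the target ledger records the credit to $\textsf{D}_b$. Value conservation across shards then follows from the incremental-accumulation relation $\textsf{C}_\textsf{finalSum} = \texttt{MulNew}(\textsf{C}_\textsf{sum}, \textsf{C}_\textsf{rate})$ (Alg. \ref{alg-cstxAcc}, line \ref{alg-cstxAcc-25}) together with correctness of the CKKS decode/decrypt: the amount deducted at the source (\textsf{outAmount}) and the amount credited at the target (\textsf{inAmount}) agree up to the agreed exchange rate, so no value is created or destroyed and the two ledgers are mutually consistent.

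For part (ii) I would split on the earliest failing step. If the source-side step fails, Fabric discards its write-set, so the initiator is not debited and no composite key is committed; the synchronization query therefore returns no matching entry and the target-side step is never triggered, leaving neither shard changed and the \textsf{CSTx} trivially rolled back. The delicate case --- which I expect to be the main obstacle --- is a target-side failure \emph{after} the source step has committed, since the two commits live in different channels and cannot be made jointly atomic in the classical ACID sense. Here I would lean on two ingredients already present in the paper: $\textsf{CK}_i$ encodes the exact escrowed amount $\textsf{V}_i$, making the source-side effect locally reversible and confining any rollback to $\textsf{CK}_i$ and its few dependents (the double-spending discussion of Section \ref{Pre}) rather than the whole batch; and the honest-intermediary assumption together with the remedial liquidity mechanism of Section \ref{Sync} guarantees that the target step either eventually commits --- reducing to case (i) --- or, if permanently infeasible, triggers a compensating transaction returning $\textsf{V}_i$ to the initiator. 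This is exactly where \emph{eventual} atomicity does the work: the argument must show that the inter-commit window always resolves to one of these two outcomes and never leaves a persistent dangling state, so every \textsf{CSTx} ends either fully applied on both shards or fully rolled back.
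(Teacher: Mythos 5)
Your proposal follows essentially the same route as the paper's proof: eventual (rather than immediate) atomicity via message passing, with the target channel acting only on source-side transactions that have already been confirmed and committed, and the uniqueness of each composite key enabling precise rollback of a failed transaction and its dependents rather than the whole batch. Your write-up is considerably more explicit than the paper's own (which does not spell out the per-shard EOV/MVCC base case, the value-conservation step, or the compensating action for a target-side failure after the source commit --- the crux you correctly isolate), but the underlying argument coincides.
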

\begin{proof}
	HiCoCS uses a message-passing approach to process cross-shard transactions to guarantee atomicity. Specifically, HiCoCS collects transactions through a composite key transaction pool and then processes transactions in batches in the source and target channels. This means that the execution of transactions does not immediately satisfy atomicity. Only when the target channel listens for all sending transactions to be confirmed and executed and the final settlement performed is atomicity finally achieved. While this approach does not require immediate atomicity, it guarantees eventual atomicity \cite{wang2019monoxide, huang2022brokerchain}. Regardless of the circumstances, HiCoCS will eventually ensure that cross-shard transactions are either completed successfully and result in consistent state changes or are rolled back to prevent any partial or inconsistent outcomes. Due to the uniqueness of the composite key corresponding to a transaction, the system can precisely roll back invalid transactions and their dependencies. Thus, HiCoCS ensures that cross-shard transactions satisfy eventual atomicity.
\end{proof}

Existing cross-sharding schemes \cite{kokoris2018omniledger, al2018chainspace} achieve strong atomicity based on a two-phase commit protocol. It is unsuitable for high concurrency scenarios as a locking mechanism can seriously affect performance. Thus, HiCoCS employs the message-passing method to guarantee the eventual atomicity, mitigating the threat of \textit{repudiation}.

\subsection{Service availability}
\begin{thm} \label{thm3}
	HiCoCS ensures high service availability by deploying an intermediary group $\mathcal{G}$ to provide cross-shard transaction services to ensure that the system can still provide services in the face of disruptions or attacks.
\end{thm}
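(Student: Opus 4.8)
The plan is to prove availability through a redundancy-and-isolation argument anchored on the assumption stated in Section \ref{Threat}, namely that the intermediary group $\mathcal{G}$ always contains at least one honest intermediary joined to both the source and target channels. First I would formalize \emph{service availability} as the property that, for any batch of incoming \textsf{CSTx}s, there exists at least one intermediary capable of receiving, accumulating, and settling them within a settlement period $T_{\textsf{settle}}$. The baseline then follows directly from the assumption: because at least one honest intermediary is guaranteed to be present and funded, the effective service capacity of the system is never zero, which already rules out total denial of service.

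Second, I would strengthen this baseline into \textbf{high} availability by combining redundancy across $\mathcal{G}$ with fault isolation supplied by the composite-key construction of Section \ref{Pre}. Since each \textsf{CSTx}$_i$ is cached under a distinct composite key $\textsf{CK}_i: g_j - \textsf{O}_a - \textsf{D}_b - \textsf{V}_i$ and settled independently in the incremental-accumulation module, the crash, compromise, or \emph{denial-of-service} of a single intermediary $g_j$ affects only the virtual sub-brokers keyed to $g_j$ and leaves the sub-brokers of every other $g_{j'} \in \mathcal{G}$ operational. I would therefore argue that a client whose chosen intermediary becomes unresponsive (actively or passively, per the DoS threat) can simply re-send the encrypted \textsf{CSTx} message to another member of $\mathcal{G}$, so the set of serviceable transactions remains closed under intermediary failure as long as one honest node survives. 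To quantify ``high,'' I would add a light independence model: if each intermediary is independently unavailable with probability $p < 1$, the whole group is unavailable with probability at most $p^{|\mathcal{G}|}$, so the availability $1 - p^{|\mathcal{G}|}$ approaches $1$ as $|\mathcal{G}|$ grows, recovering the informal claim.

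Finally, I would close the two residual failure modes that are not node crashes. For liquidity exhaustion I would invoke the remedial mechanism of Section \ref{Sync} (dynamic fund management, the distributed liquidity pool, and prioritized processing of small-amount transactions), which keeps fund transfers flowing even under fund constraints; for the CKPoE coordinator I would note that the elected intermediary in Section \ref{Reuse} is re-elected every period, so it is not a permanent single point of failure. The hardest part will be the isolation step: rigorously showing that a local failure cannot cascade into a global stall. The subtlety is that intermediaries share the cross-channel state produced by the transaction-synchronization module and jointly validate the regenerated \verb|ComKeyPool| in CKPoE, so I must verify that synchronization and validation can make progress with any honest node (or honest majority) responding, rather than requiring every intermediary in $\mathcal{G}$ to participate; establishing this liveness-under-partial-response property is what turns the redundancy intuition into a sound availability guarantee.
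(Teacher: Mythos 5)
Your core argument coincides with the paper's: the published proof of Theorem~\ref{thm3} is a two-sentence redundancy claim --- multiple intermediaries work together, and if some are attacked or fail, at least one $g_j \in \mathcal{G}$ "can still take over their work," so there is no single point of failure. Your first step (at least one honest, funded intermediary $\Rightarrow$ nonzero service capacity) is exactly that argument made explicit. Everything after that is material you add and the paper does not: the fault-isolation claim that a failed $g_j$ only strands the composite keys $g_j - \textsf{O}_a - \textsf{D}_b - \textsf{V}_i$ prefixed by it, the client re-send mechanism, the $1 - p^{|\mathcal{G}|}$ quantification of "high," the liquidity remediation from Section~\ref{Sync}, and the observation that the CKPoE coordinator is periodically re-elected. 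These additions are sensible and strictly strengthen the argument, but be aware that two of them rest on mechanisms the paper never specifies: there is no stated protocol by which a client detects an unresponsive intermediary and re-routes a \textsf{CSTx}, nor any failover for transactions already cached under a dead intermediary's composite keys; and the liveness-under-partial-response property you correctly flag as the hard part (cross-channel synchronization and CKPoE validation making progress without all of $\mathcal{G}$ responding) is neither claimed nor proved in the paper --- the phrase "take over their work" silently assumes it. So your proposal is a faithful, more honest rendering of the same redundancy argument; the open items you identify are genuine gaps in the paper's own reasoning rather than defects of your proof relative to it.
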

\begin{proof}
	In the HiCoCS architecture, multiple intermediary nodes work together to participate in the highly concurrent processing of transactions, and even if some nodes are attacked or fail, at least one node $g_j \in \mathcal{G}$ can still take over their work and keep the system running normally. This design effectively spreads the risk and reduces the possibility of a single point of failure, thus ensuring the service availability of the system.
\end{proof}

Thus, HiCoCS can effectively resist \textit{denial of service} threats and fulfill high availability requirements.

\section{Performance Evaluation} \label{Evaluation}
\subsection{Settings} 
\textbf{Experimental prototype.} We use \verb|Golang| in Hyperledger Fabric v2.4\footnote{https://github.com/hyperledger/fabric/tree/release-2.4} to develop and implement a HiCoCS prototype. It leverages Hyperledger Fabric's multi-channel architecture to construct a scalable, multi-shard network comprising up to 128 nodes across 32 shards. We utilize Docker containers\footnote{https://github.com/jenkinsci/docker} as the execution environment for smart contracts. We are going to open-source the code\footnote{https://github.com/cwf1999/HiCoCS} following the paper's acceptance.

\textbf{Testbed.} We evaluate the performance of the prototype on a machine with an Intel i7-13700 CPU and 128GB of RAM. Fabric's \texttt{BatchTimeout} is set to 2 seconds. We adjust the block size setting from 10 MB to 160 MB. We evaluate the performance of HiCoCS with and without homomorphic encryption enabled for privacy preservation in concurrent cross-shard trading scenarios (with 1,000 concurrent threads). The transfer amounts for \textsf{CSTx}s are randomly selected from positive real numbers in blocks collected from the Ethereum blockchain (block height: 9,000,000-10,000,000). To model the system's transaction conflicts, we adopt both active and passive approaches: (i) \textit{Active.} Keeping the block size constant, we vary the skewness $f$ of the generated transactions, i.e., the proportion of \textsf{CSTx}s processed through the same intermediary account. (ii) \textit{Passive.} Fixing the skewness $f$, we vary the block size, and transactions are transferred among randomly selected participants. Changing the block size indirectly controls the probability of transaction conflicts within a block.

\subsection{Baselines and Metrics}
\textbf{Baselines.} We compare HiCoCS with three baselines: an implemented vanilla version, and two simulated state-of-the-art 2PL/OCC sharding schemes. (i) \textit{Vanilla version}. Since HiCoCS is the first scheme to propose high concurrency cross-sharding on Hyperledger Fabric, we compare its vanilla version as a baseline. The concept is introduced in Section \ref{intro}. (ii) \textit{AHL+} \cite{dang2019towards}. This scheme ensures the consistency and atomicity of \textsf{CSTx}s by locking resources and reaching agreements across committees through a 2PL protocol. (iii) \textit{Meepo} \cite{zheng2022meepo}. lt reduces transaction conflicts and enhances communication efficiency by introducing cross-epoch and cross-call protocols for ordered cross-shard communication (i.e., OCC) between blocks.

\textbf{Metrics.} We measure the performance of HiCoCS using the following metrics. (i) \textit{Transaction success rate (TSR).} The percentage of successfully executed transactions out of the total number of initiated transactions. This is a key metric to measure the system's ability of conflict resolution in highly concurrent \textsf{CSTx} scenarios, because in the worst case, only the first \textsf{CSTx} in a block may be successful and all other \textsf{CSTx}s are aborted due to concurrent data conflicts. (ii) \textit{Transaction throughput.} The number of transactions per second (TPS) successfully processed by the system. This is also an important measure of concurrency capacity. (iii) \textit{Transaction latency.} The time taken by the user from the initiation of a \textsf{CSTx} to the final successful write to the ledger. (iv) \textit{CPU \& memory utilization.} The percentage of CPU and memory resources utilized by the system during operation.

\subsection{Evaluation Results}
\textbf{High concurrency testing.} The focus of our work is on resolving highly concurrent conflicts in \textsf{CSTx}s. Thus, we first present the system's performance without enabling the homomorphic encryption module for privacy preservation. To evaluate HiCoCS and the baselines' ability to handle transaction conflicts, we measure the transaction success rate and average throughput by actively and passively introducing conflicts during testing. We then evaluate the energy efficiency of concurrent transactions and the dynamic performance under varying numbers of highly concurrent transactions.

\begin{figure}[t]
	\vspace{-0.2cm}
	\centering
	\subfigure[Transaction success rate (TSR).]{
		\begin{minipage}[t]{0.49\linewidth}
			\centering
			\includegraphics[width=1.6in]{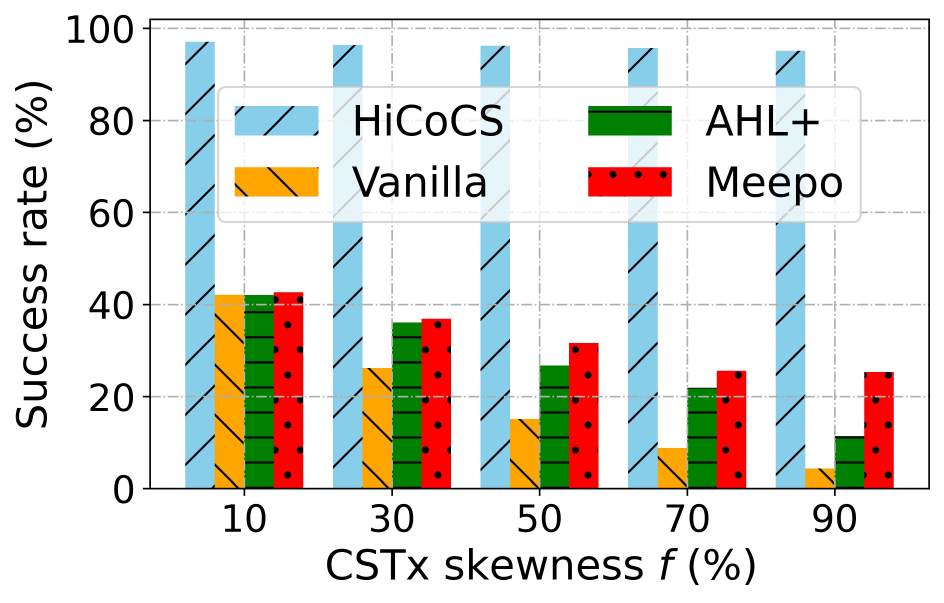}
			\label{TSR-Skewness}
		\end{minipage}%
	}%
	\subfigure[Transaction throughput (TPS).]{
		\begin{minipage}[t]{0.49\linewidth}
			\centering
			\includegraphics[width=1.6in]{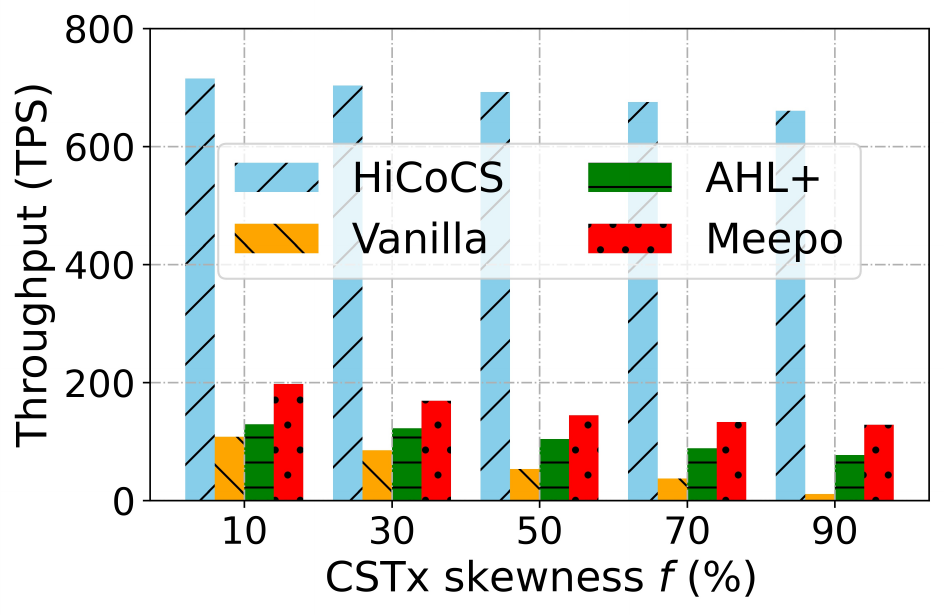}
			\label{TPS-Skewness}
		\end{minipage}
	}%
	\centering
	\vspace{-0.2cm}
	\caption{Comparison of concurrency performance under varying skewness $f$.}
	\label{active}
	\vspace{-0.3cm}
\end{figure}

\begin{figure}[t]
	\centering
	\subfigure[Transaction success rate (TSR).]{
		\begin{minipage}[t]{0.49\linewidth}
			\centering
			\includegraphics[width=1.6in]{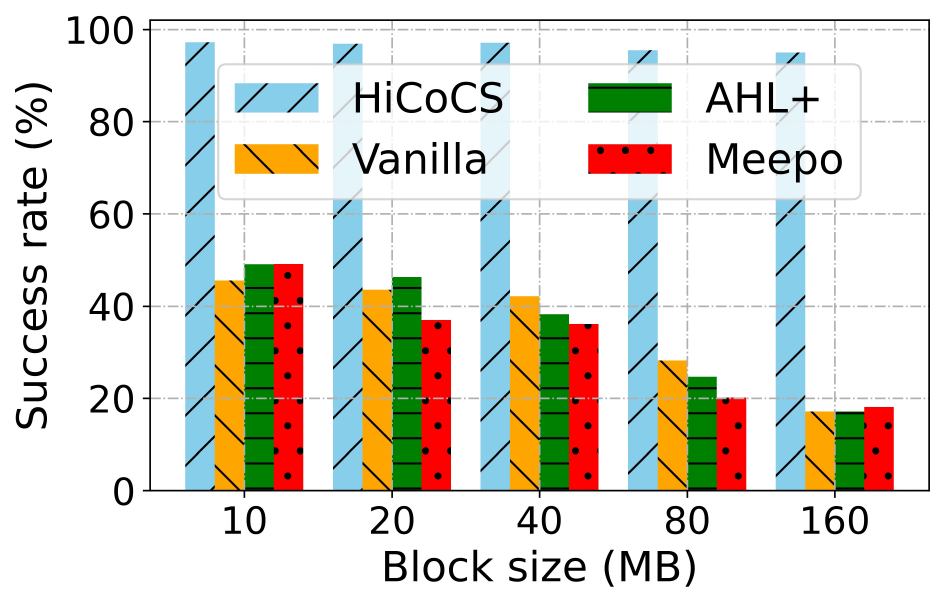}
			\label{TSR-BlockSize}
		\end{minipage}%
	}%
	\subfigure[Transaction throughput (TPS).]{
		\begin{minipage}[t]{0.49\linewidth}
			\centering
			\includegraphics[width=1.6in]{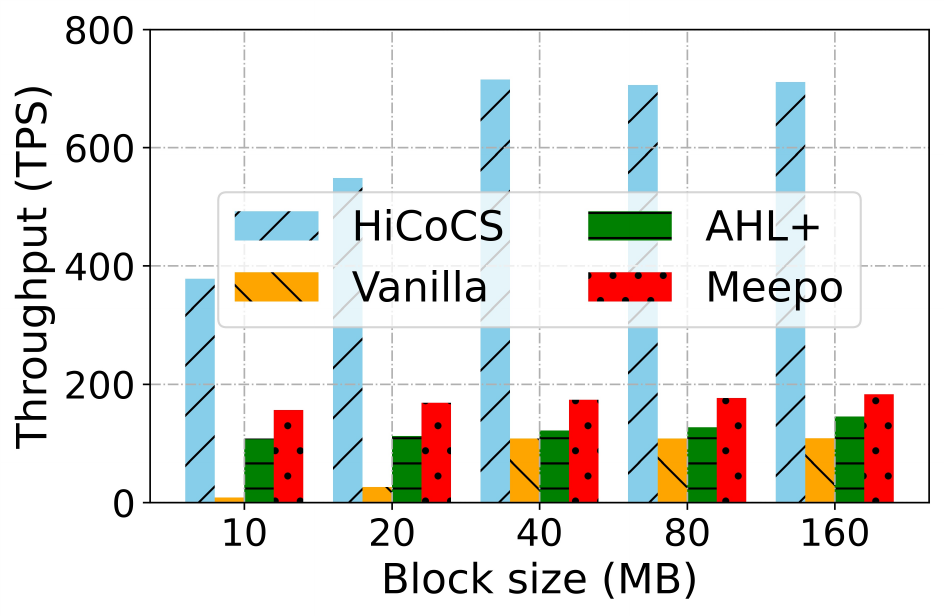}
			\label{TPS-BlockSize}
		\end{minipage}
	}%
	\centering
	\vspace{-0.2cm}
	\caption{Comparison of concurrency performance under varying block sizes.}
	\label{Passive}
	\vspace{-0.3cm}
\end{figure}

\begin{table}[t]
	\begin{center}
		\caption{Comparison of Concurrent Energy Efficiency (CPU \& Memory Utilizations, Unit: \%).}
		\label{CMU}	
		\vspace{-0.5cm}
		\resizebox{1\columnwidth}{!}{	
			\begin{threeparttable}
				\begin{tabular}{cccccc}
					\toprule
					Skewness &10 &30 &50 &70 &90  \\
					\midrule
					\textbf{HiCoCS} &\textbf{55.6} \& \textbf{58.2} &\textbf{62.5} \& \textbf{66.4} &\textbf{75.4} \& \textbf{71.2} &\textbf{87.9} \& \textbf{76.8} &\textbf{90.3} \& \textbf{80.1}\\
					
					Vanilla &72.1 \& 75.0 &78.9 \& 82.4 &87.9 \& 88.0 &95.4 \& 91.3 &98.5 \& 93.7\\
					
					AHL+ &68.5 \& 71.2 &75.6 \& 79.1 &84.5 \& 85.2 &93.1 \& 89.8 &97.0 \& 92.1\\
					
					Meppo &64.1 \& 67.9 &71.8 \& 76.1 &81.3 \& 82.9 &90.4 \& 87.2 &94.0 \& 90.3\\
					\bottomrule
				\end{tabular}
				
				
			\end{threeparttable}
		}
	\end{center}	
	\vspace{-0.3cm}
\end{table}

\textit{1) Active test results:} We fix the block size at 40 MB and change the transaction skewness $f$ to perform transaction conflict testing. Fig. \ref{active} shows the influence of skewness $f$ on concurrent transactions. The results of Fig. \ref{TSR-Skewness} indicate that all baseline transaction success rates decrease as skewness $f$ increases. However, HiCoCS experiences minimal decrease, stabilizing above 95\%. HiCoCS improves TSR by an average factor of 2.2 to 8.1 compared to the baselines. We analyze that as skewness $f$ increases, HiCoCS generates more composite keys for the intermediaries of high-frequency services, thereby minimizing the increase in transaction conflict rates. In contrast, the baselines experience more transaction aborts due to competitive conditions. Fig. \ref{TPS-Skewness} further illustrates the comparison of average throughput. It is observed that the throughput of all schemes decreases as skewness $f$ increases. However, HiCoCS improves in TPS by an average of 3.5 to 20.2 times compared to the baselines. This occurs because the baselines experience a significant number of transaction retries as transaction aborts increase. HiCoCS effectively mitigates concurrency conflicts by using virtual sub-brokers.

\textit{2) Passive  test results:} We set the skewness $f$ to 10\% and vary the block size for passive conflict testing. Fig. \ref{Passive} illustrates the impact of different block sizes. Fig. \ref{TSR-BlockSize} demonstrates that the baseline transaction success rates decrease as block size increases. However, similar to the active test, HiCoCS experiences almost no decrease. On average, HiCoCS improves TSR by a factor of 2.1 to 2.3 compared to the baselines. This aligns with expectations, as larger blocks increase the probability of \textsf{CSTx} conflicts; however, HiCoCS pre-processes concurrent transactions to minimize conflict occurrences, effectively avoiding most of them. Fig. \ref{TPS-BlockSize} illustrates that TPS increases across all schemes as block size increases. This is consistent with expectations, as larger blocks reduce the frequency of network communications. On average, HiCoCS improves throughput by a factor of 2.5 to 15.7 compared to the baselines.

\begin{figure}[t]
	\centering
	\subfigure[Throughput (TPS).]{
		\begin{minipage}[t]{0.45\linewidth}
			\centering
			\includegraphics[width=1.6in]{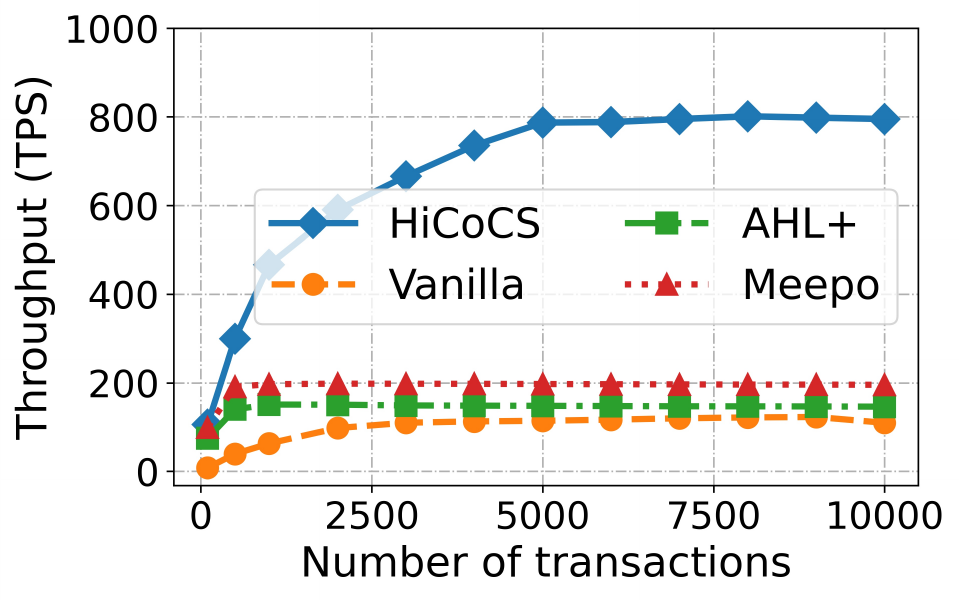}
			\label{PeakTPS-TxNum}
			\vspace{-0.6cm}
		\end{minipage}%
	}%
	\hspace{0.3cm}
	\subfigure[Latency (ms).]{
		\begin{minipage}[t]{0.45\linewidth}
			\centering
			\includegraphics[width=1.6in]{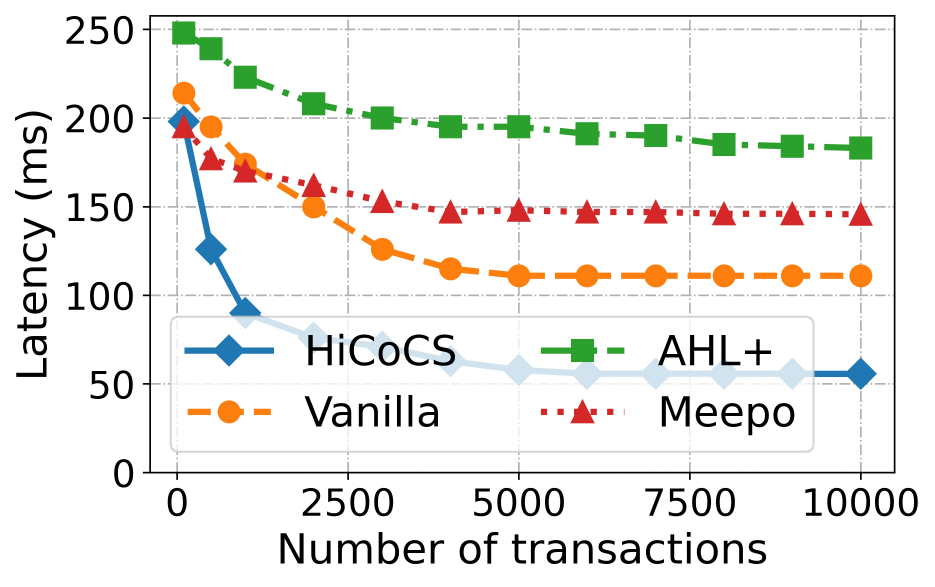}
			\label{Latency-TxNum}
			\vspace{-0.6cm}
		\end{minipage}
	}%
	
	
	\subfigure[CPU utilization (\%).]{
		\begin{minipage}[t]{0.45\linewidth}
			\centering
			\includegraphics[width=1.6in]{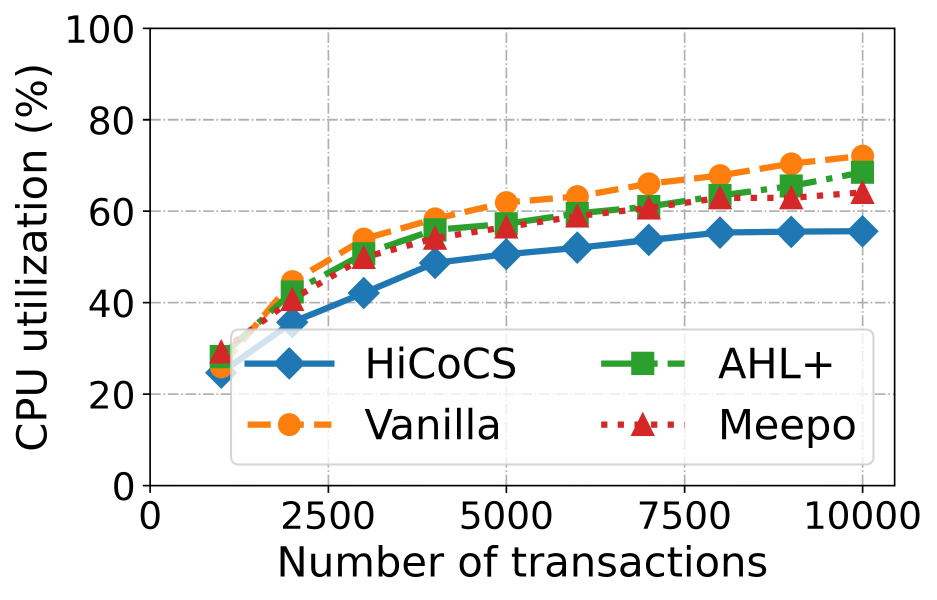}
			\label{CPU-TxNum}
			\vspace{-0.6cm}
		\end{minipage}%
	}%
	\hspace{0.3cm}
	\subfigure[Memory utilization (\%).]{
		\begin{minipage}[t]{0.45\linewidth}
			\centering
			\includegraphics[width=1.6in]{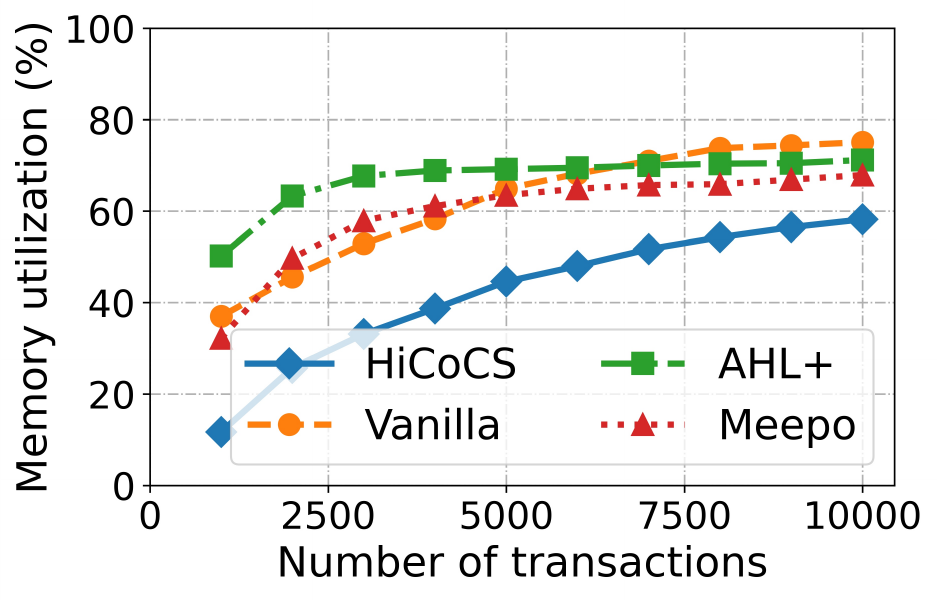}
			\label{Memory-TxNum}
			\vspace{-0.6cm}
		\end{minipage}
	}%
	\centering
	\vspace{-0.2cm}
	\caption{Dynamic performance comparison for varying numbers of transactions.}
	\label{Dynamic-P}
	\vspace{-0.3cm}
\end{figure}

\textit{3) Concurrent energy efficiency results:} Table \ref{CMU} presents the average CPU and memory usage of different schemes for each shard pair under the load of 10,000 highly concurrent transactions, with a block size of 40 MB and varying skewness. Overall, the resource overhead of all schemes increases as skewness increases. However, HiCoCS reduces CPU and memory consumption by an average of 8.0\% to 14.8\% and 12.9\% to 18.3\%, respectively, compared to the baselines. This reflects, on the one hand, the efficiency advantage of HiCoCS in handling highly concurrent transaction conflicts without overloading due to frequent transaction reissuance. On the other hand, it also demonstrates the optimization of resource allocation enabled by composite key reuse in HiCoCS.

\textit{4) Dynamic performance results:} We further compare the dynamic performance of HiCoCS and the baselines under varying numbers of highly concurrent transactions. The skewness $f$ is set to 10\%, and the block size is fixed at 40 MB. Fig. \ref{Dynamic-P} shows the variation in transaction throughput, latency, and CPU \& memory utilization. The results of Fig. \ref{PeakTPS-TxNum} indicate that the throughput of HiCoCS only begins to level off when transaction volume is larger compared to the baselines, demonstrating HiCoCS's stronger concurrent transaction processing capability. Fig. \ref{Latency-TxNum} shows that the average transaction latency of HiCoCS is 43.9\%-62.0\% lower than that of the baselines. This reflects HiCoCS's high efficiency in batch processing transactions. Fig. \ref{CPU-TxNum} and \ref{Memory-TxNum} show that CPU and memory utilization in HiCoCS are 12.4\%-18.0\% and 31.6\%-38.5\% lower than those of the baselines, respectively. This demonstrates HiCoCS's superiority in terms of resource utilization.

\begin{figure}[h]
	\centering
	\subfigure[CPU utilization (\%).]{
		\begin{minipage}[t]{0.45\linewidth}
			\centering
			\includegraphics[width=1.6in]{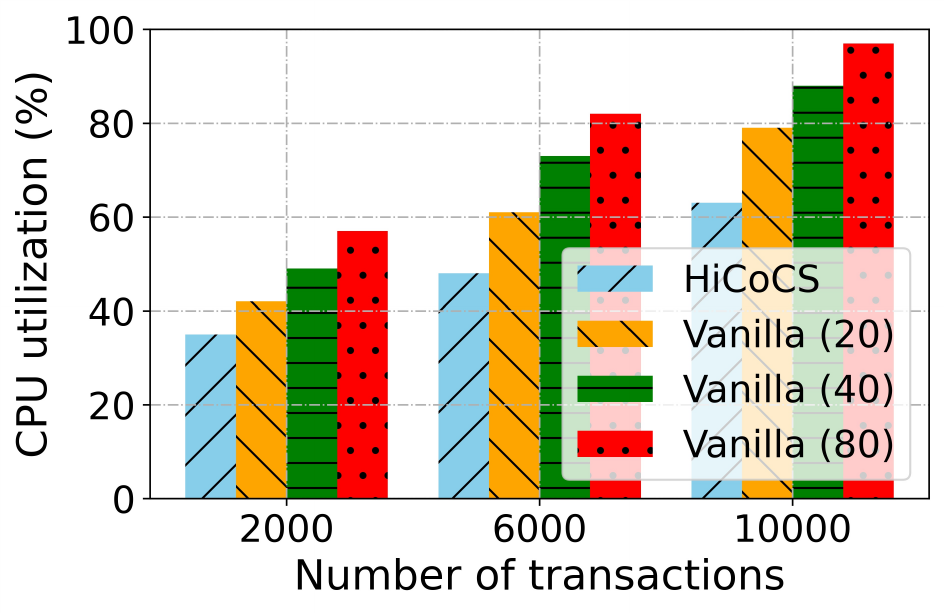}
			\label{CPU-VS}
			\vspace{-0.6cm}
		\end{minipage}%
	}%
	\hspace{0.3cm}
	\subfigure[Memory utilization (\%).]{
		\begin{minipage}[t]{0.45\linewidth}
			\centering
			\includegraphics[width=1.6in]{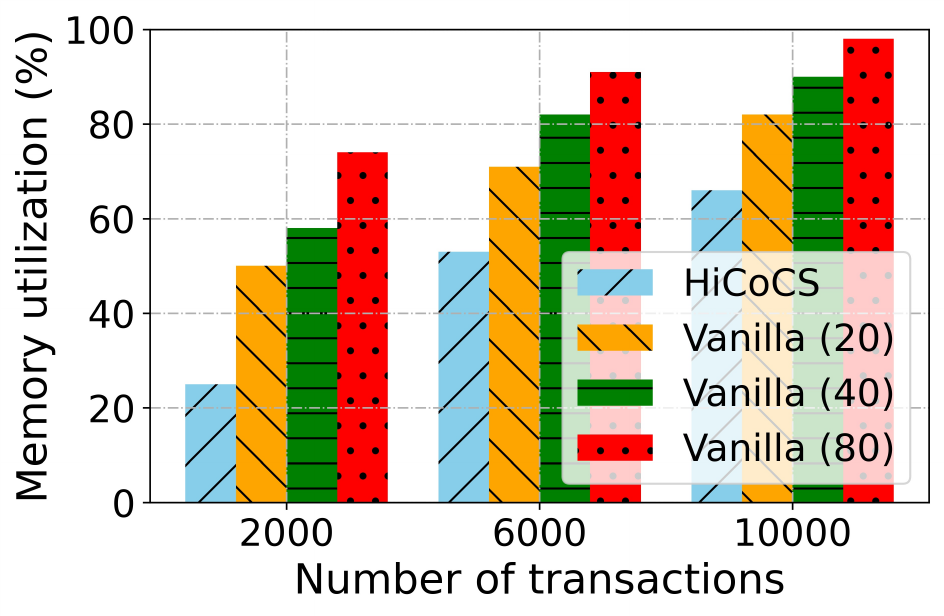}
			\label{Mem-VS}
			\vspace{-0.6cm}
		\end{minipage}
	}%
	
	
	\subfigure[Throughput (TPS).]{
		\begin{minipage}[t]{0.45\linewidth}
			\centering
			\includegraphics[width=1.6in]{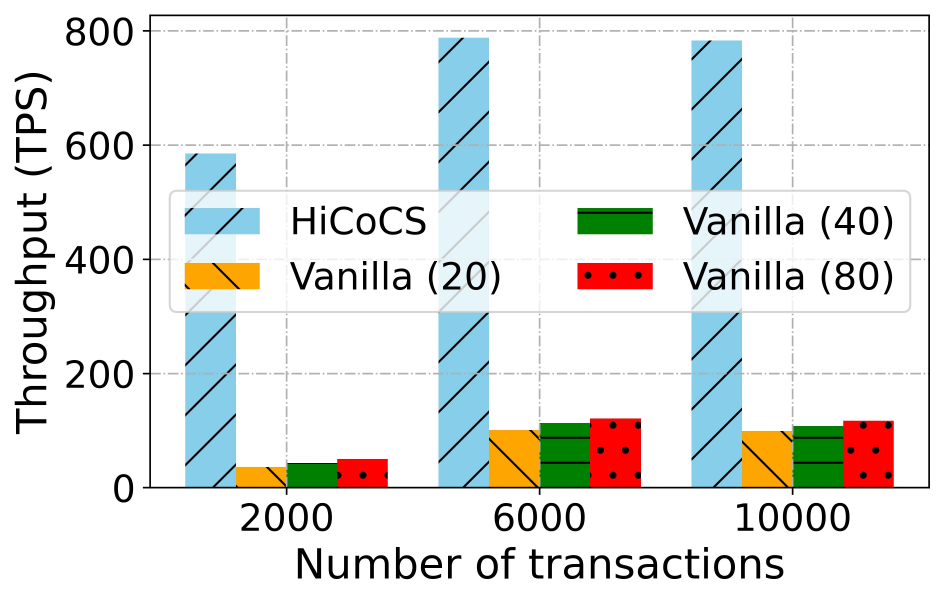}
			\label{TPS-VS}
			\vspace{-0.6cm}
		\end{minipage}%
	}%
	\hspace{0.3cm}
	\subfigure[Latency (ms).]{
		\begin{minipage}[t]{0.45\linewidth}
			\centering
			\includegraphics[width=1.6in]{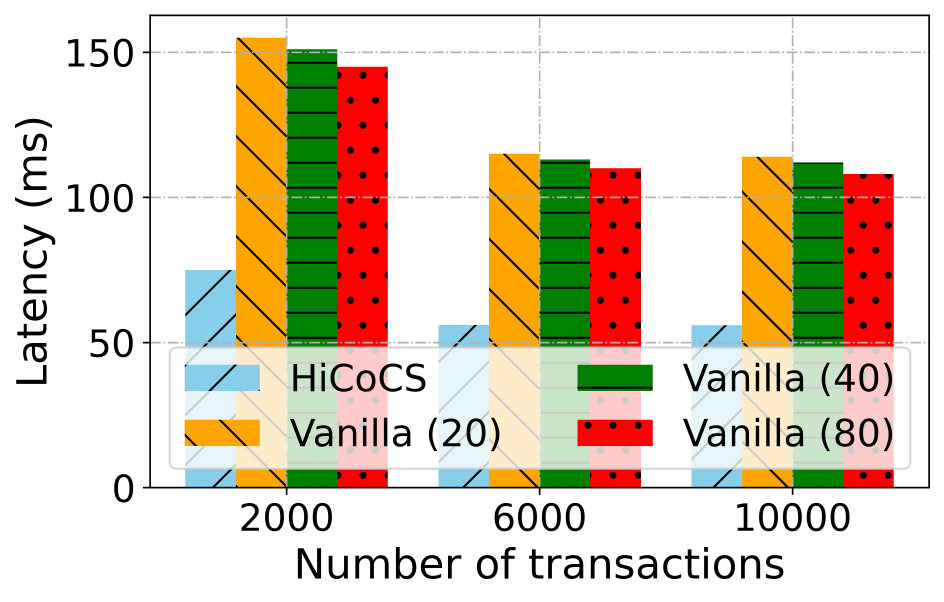}
			\label{Latency-VS}
			\vspace{-0.6cm}
		\end{minipage}
	}%
	\centering
	\vspace{-0.2cm}
	\caption{A quantitative comparison of system overhead and performance between using composite keys and adding intermediaries.}
	\label{CK-VS}
	\vspace{-0.3cm}
\end{figure}

\textbf{Composite key testing.} First, we quantitatively test composite keys to compare the system overhead and performance when using composite keys versus adding new intermediaries. Next, we tested the memory consumption and query time before and after composite key reuse to verify its effectiveness.

\textit{1) Quantitative test and analysis:} The default setting for the intermediary group $\mathcal{G}$ between each shard pair consists of 20 accounts providing cross-shard trading services. We vary the vanilla version's size of $\mathcal{G}$ to 40 and 80 to compare it with HiCoCS. The skewness $f$ is set to 30\%, and HiCoCS generates and reuses composite keys based on conflicting transactions. Fig. \ref{CPU-VS} and \ref{Mem-VS} present the comparative results of CPU and memory utilization for varying numbers of transactions. The results indicate that HiCoCS reduces CPU and memory usage by an average of 19.4\%-38.4\% and 31.6\%-46.9\%, respectively, compared to the method of adding intermediaries. We analyze this because the effect of adding a virtual sub-broker is nearly equivalent to that of adding a new intermediary. However, the former incurs minimal memory overhead, whereas the latter introduces significant overhead for authentication, account creation \& maintenance, and transaction processing scheduling. Thus, this test validates that HiCoCS's approach of using virtual sub-brokers is more efficient than adding new intermediaries. Fig. \ref{TPS-VS} and \ref{Latency-VS} present the comparative results for average transaction throughput and latency. The results in Fig. \ref{TPS-VS} indicate that adding intermediaries can slightly improve throughput. This is because contention conflicts among intermediaries remain severe under highly concurrent transactions. HiCoCS demonstrates an average improvement of 7.3 to 9.7 times in throughput compared to the method of adding intermediaries. The results in Fig. \ref{Latency-VS} indicate that adding intermediaries slightly reduces latency due to the reduction in conflict locking time. HiCoCS reduces latency by an average of 48.5\% to 51.3\% compared to them.

\textit{2) Reuse effectiveness test:} The number of composite keys primarily affects system memory usage. Fig. \ref{Mem-U-TxNum} presents the results of memory utilization before and after composite key optimization, showing that memory utilization gradually increases and stabilizes as the number of \textsf{CSTx} increases. However, memory utilization decreases significantly after optimization. The average memory utilization of the scheme with composite key reuse enabled is reduced by 31.6\%. The time taken for the intermediary to query the \verb|ComKeyPool| before and after composite key reuse optimization is shown in Fig. \ref{Q-Time}. As the number of \textsf{CSTx} increases, the time taken by the intermediary to query the \verb|ComKeyPool| also increases. This is because more composite keys are created within \verb|ComKeyPool|, expanding the range of the intermediary's fuzzy query and consequently increasing query time. After composite key reuse optimization, the average query time for \textsf{CSTx} is reduced by 8.3\%.

\begin{figure}[t]
	\vspace{-0.2cm}
	\centering
	\subfigure[Memory utilization (\%).]{
		\begin{minipage}[t]{0.49\linewidth}
			\centering
			\includegraphics[width=1.6in]{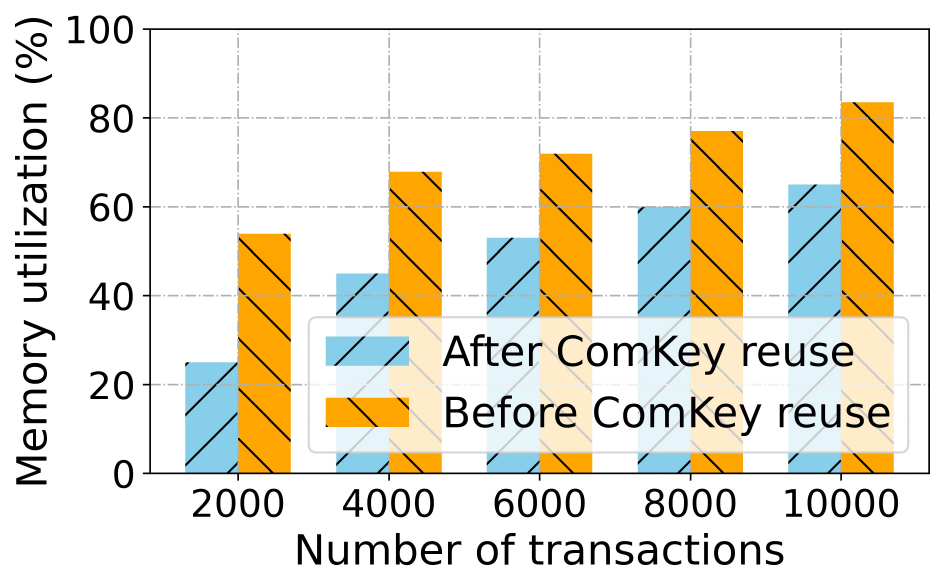}
			\label{Mem-U-TxNum}
			\vspace{-0.5cm}
		\end{minipage}%
	}%
	\subfigure[Time of querying \texttt{ComKeyPool}.]{
		\begin{minipage}[t]{0.49\linewidth}
			\centering
			\includegraphics[width=1.6in]{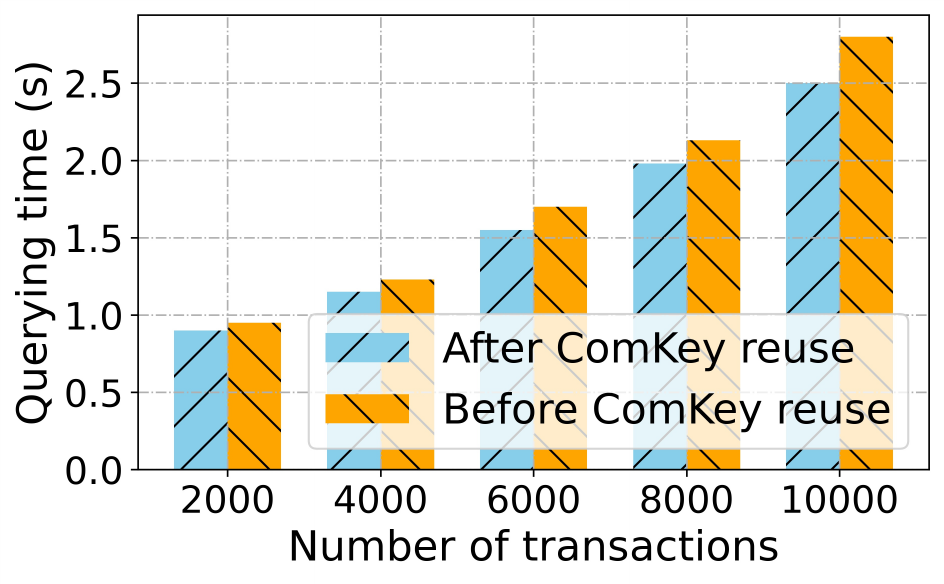}
			\label{Q-Time}
			\vspace{-0.5cm}
		\end{minipage}
	}%
	\centering
	\vspace{-0.2cm}
	\caption{Performance comparison before and after composite key reuse optimization in HiCoCS.}
	\vspace{-0.3cm}
\end{figure}

\textbf{Privacy preservation-enabled testing.} We first evaluate the impact on system performance after enabling the privacy-preserving mechanism, followed by an evaluation of the performance changes after scaling the sharded network. Finally, we evaluate the results of ciphertext computation.

\textit{1) Performance impact of FHE:} Fig. \ref{PP1} shows the performance change of the system before and after adding the privacy-preserving mechanism (i.e., FHE). The results of Fig. \ref{TPS-3Type} HiCoCS throughput decreases by an average of 19.5\% due to the series of encryption operations after FHE is enabled. However, compared to the vanilla version, throughput still improves by 5.7 times. HiCoCS (with FHE) throughput reaches 702.3 TPS when the number of transactions reaches 8,000, and then stabilizes due to machine performance limitations. Fig. \ref{Latency-3Type} shows the impact of privacy preservation on system latency. The maximum latency of HiCoCS (with FHE) is 203.2 ms, and the average latency is 118.7 ms, representing a 32.6\% increase compared to HiCoCS (without FHE), which has an average latency of 80.0 ms. However, compared to the vanilla version, latency in HiCoCS (with FHE) is still reduced by an average of 13.2\%. The average processing latency of the system decreases as transaction volume increases because the number of concurrent transactions processed per unit of time increases, and latency stabilizes after the system reaches its processing limit.

\begin{figure}[t]
	\centering
	\subfigure[Transaction throughput (TPS).]{
		\begin{minipage}[t]{0.49\linewidth}
			\centering
			\includegraphics[width=1.6in]{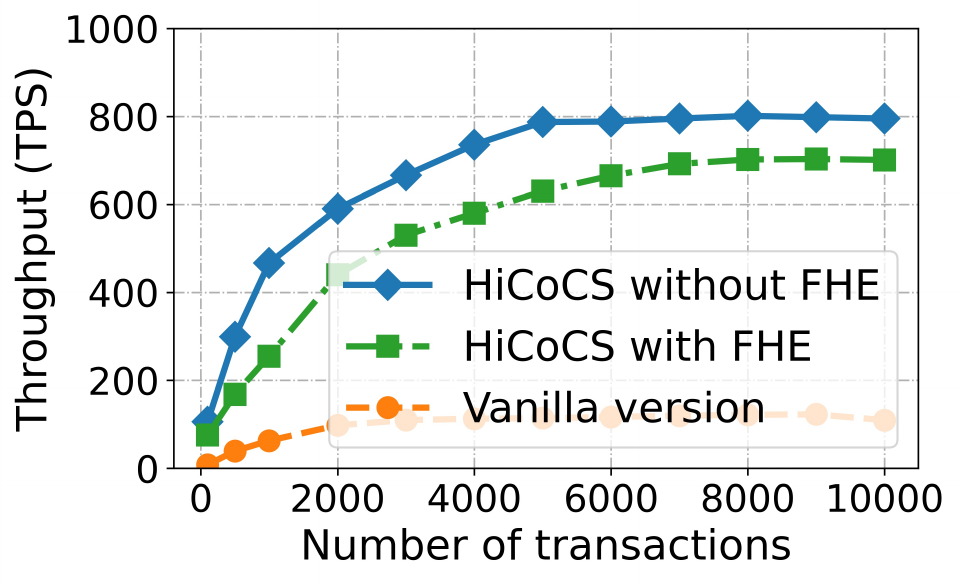}
			\label{TPS-3Type}
		\end{minipage}%
	}%
	\subfigure[Transaction latency (ms).]{
		\begin{minipage}[t]{0.49\linewidth}
			\centering
			\includegraphics[width=1.6in]{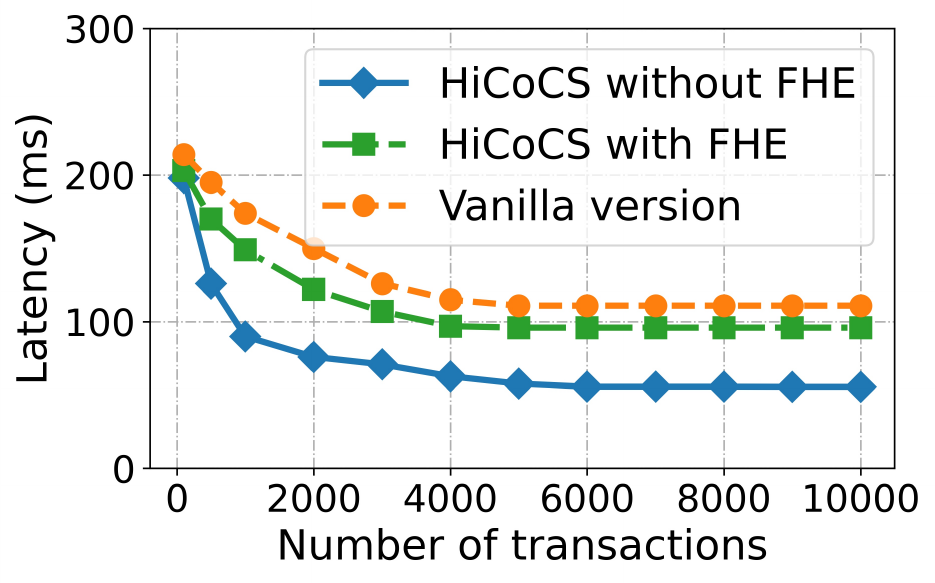}
			\label{Latency-3Type}
		\end{minipage}
	}%
	\centering
	\caption{Performance comparison of HiCoCS before and after enabling privacy-preserving mechanism (under the 16-shard, 64-node network).}
	\label{PP1}
\end{figure}

\begin{figure}[t]
	\vspace{-0.2cm}
	\centering
	\subfigure[Transaction throughput (TPS).]{
		\begin{minipage}[t]{0.49\linewidth}
			\centering
			\includegraphics[width=1.6in]{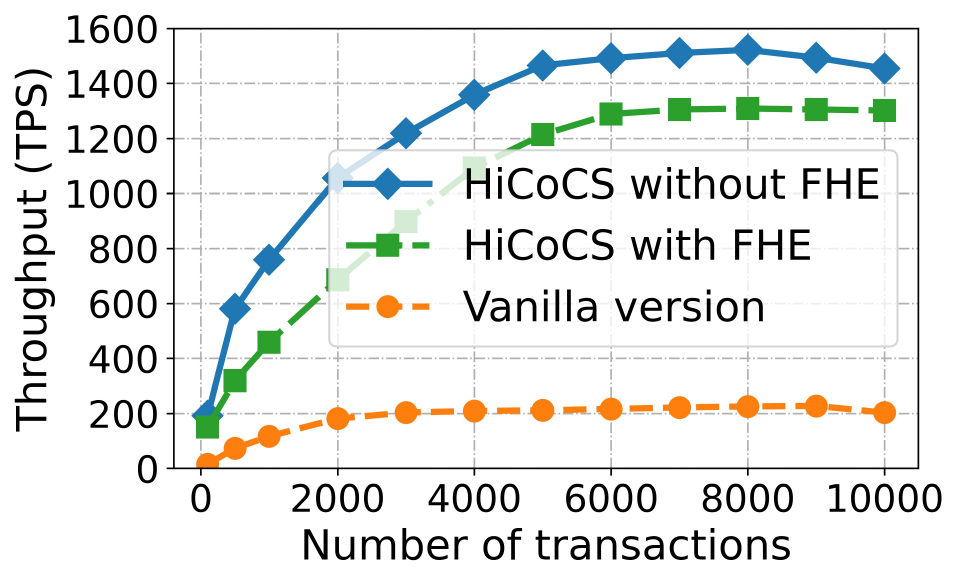}
			\label{TPS-3Type2}
			\vspace{-0.5cm}
		\end{minipage}%
	}%
	\subfigure[Transaction latency (ms).]{
		\begin{minipage}[t]{0.49\linewidth}
			\centering
			\includegraphics[width=1.6in]{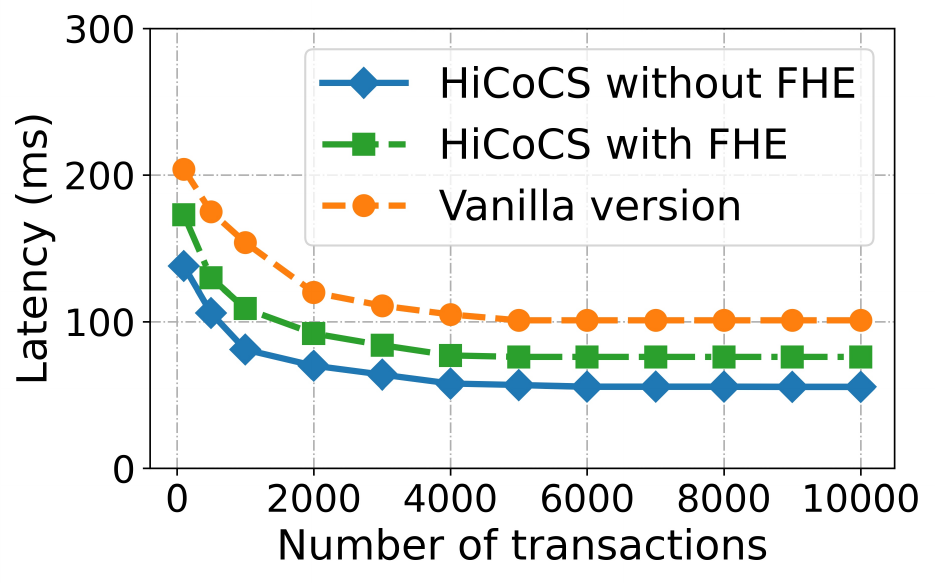}
			\label{Latency-3Type2}
			\vspace{-0.5cm}
		\end{minipage}
	}%
	\centering
	\vspace{-0.2cm}
	\caption{Performance comparison of HiCoCS before and after enabling privacy-preserving mechanism (under the 32-shard, 128-node network).}
	\label{PP2}
	\vspace{-0.3cm}
\end{figure}

\textit{2) Network scaling test:} The default test environment is conducted in a 16-shard, 64-node network to avoid reaching the machine performance bottleneck. To evaluate the scalability of the overall HiCoCS scheme, we present the scaling results in a 32-shard, 128-node network in Fig \ref{PP2}. In this larger network, as shown in Fig. \ref{TPS-3Type2}, compared to previous test results in a 16-shard, 64-node network, the throughput exhibits a nearly twofold linear growth trend. This indicates that HiCoCS demonstrates good scalability with all building blocks enabled, processing more transactions securely in parallel. Fig. \ref{Latency-3Type2} shows transaction latency in the 32-shard, 128-node network, and the results indicate that HiCoCS latency is reduced by an average of 21.3\% compared to the 16-shard, 64-node network. We analyze that although increasing the number of shards distributes the network's transaction load and speeds up transaction processing within each shard, it also increases communication latency due to the greater number of cross-shard transactions. HiCoCS is more resource-efficient than the vanilla version, leading to a more significant reduction in transaction latency.

\textit{3) Ciphertext calculation results:} To ensure the accuracy of the ciphertext calculation, we evaluate the gap between the approximate CKKS ciphertext homomorphic computations and the actual calculated values. We randomly initiated 10,000 \textsf{CSTx}s, with their transaction amounts encrypted using CKKS, before entering the ciphertext summation. Finally, we decrypt the summation ciphertext results and compare them with the summation results of the original data. The results show that the error rate is within $10^{-5}$ compared to the actual transaction amounts. In practice, this error can be ignored or offset by the transaction fee.

\section{Conclusion} \label{conclusion}
This paper proposes HiCoCS, the first Hyperledger Fabric-based implementation of cross-shard transaction middleware featuring high concurrency and privacy preservation. We utilize composite keys to build virtual sub-brokers for intermediaries in the vanilla version to mitigate concurrent transaction conflicts. We also consider composite key reuse to their number and lower system overhead. HiCoCS utilizes fully homomorphic encryption to ensure the privacy preservation of intermediaries in cross-shard transactions. Our evaluation of the developed prototype demonstrates that HiCoCS outperforms the vanilla version and state-of-the-art schemes across all metrics. In future work, we plan to enhance the generality of HiCoCS and extend this work to more blockchains and a wide range of Web3 applications.

\bibliographystyle{IEEEtran}
\normalem
\bibliography{cite}



\begin{IEEEbiography}[{\includegraphics[width=1in,height=1.25in,clip,keepaspectratio]{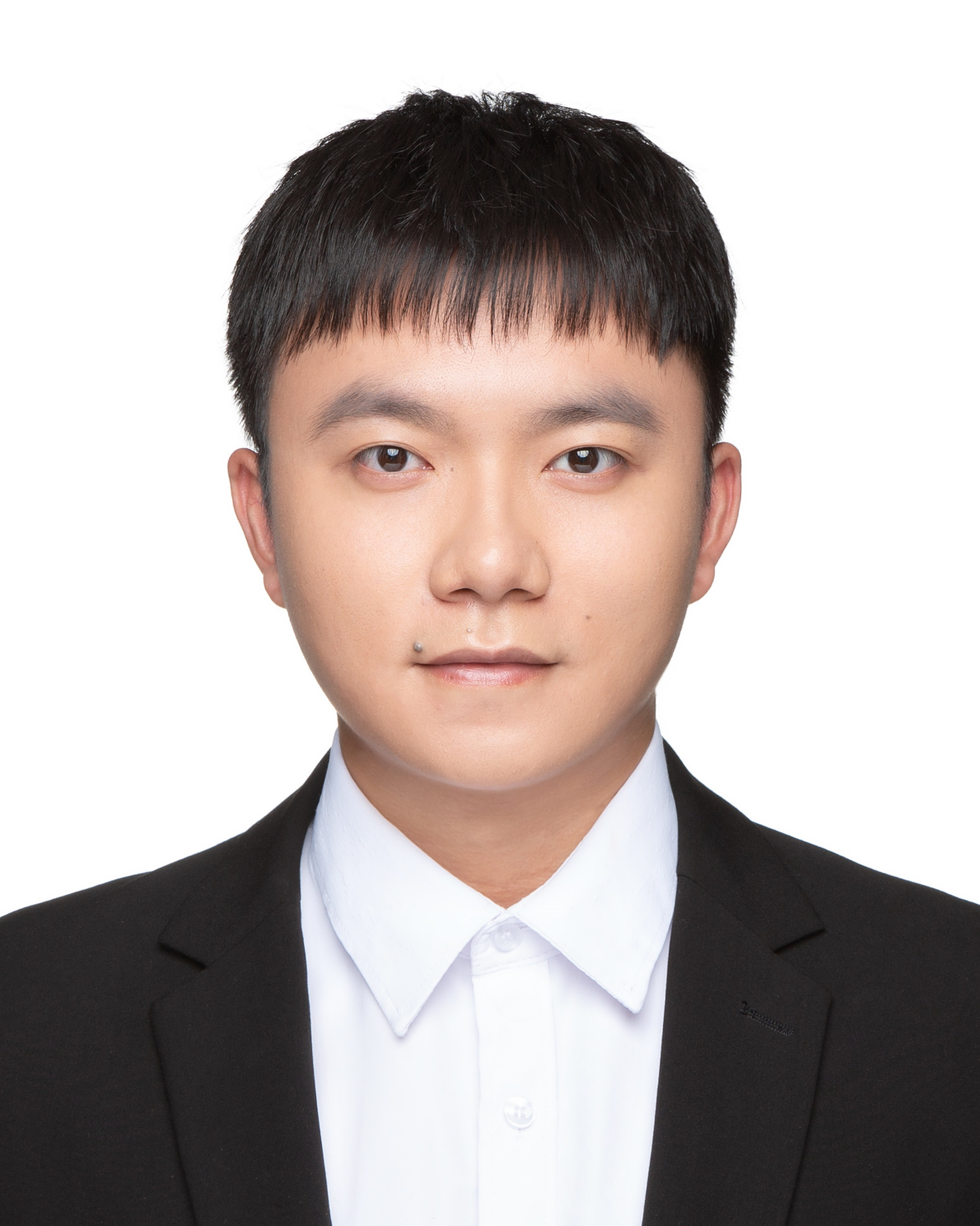}}]{Lingxiao Yang}
	(Graduate Student Member, IEEE) received B.E. degree in network engineering from Xidian University in 2018. He is a member of the Shaanxi Key Laboratory of Network and System Security. He is currently pursuing the Ph.D. degree with the School of Computer Science and Technology, Xidian University, Xi’an, China. His research interests include Web3 and blockchain applications.
\end{IEEEbiography}
\begin{IEEEbiography}[{\includegraphics[width=1in,height=1.25in,clip,keepaspectratio]{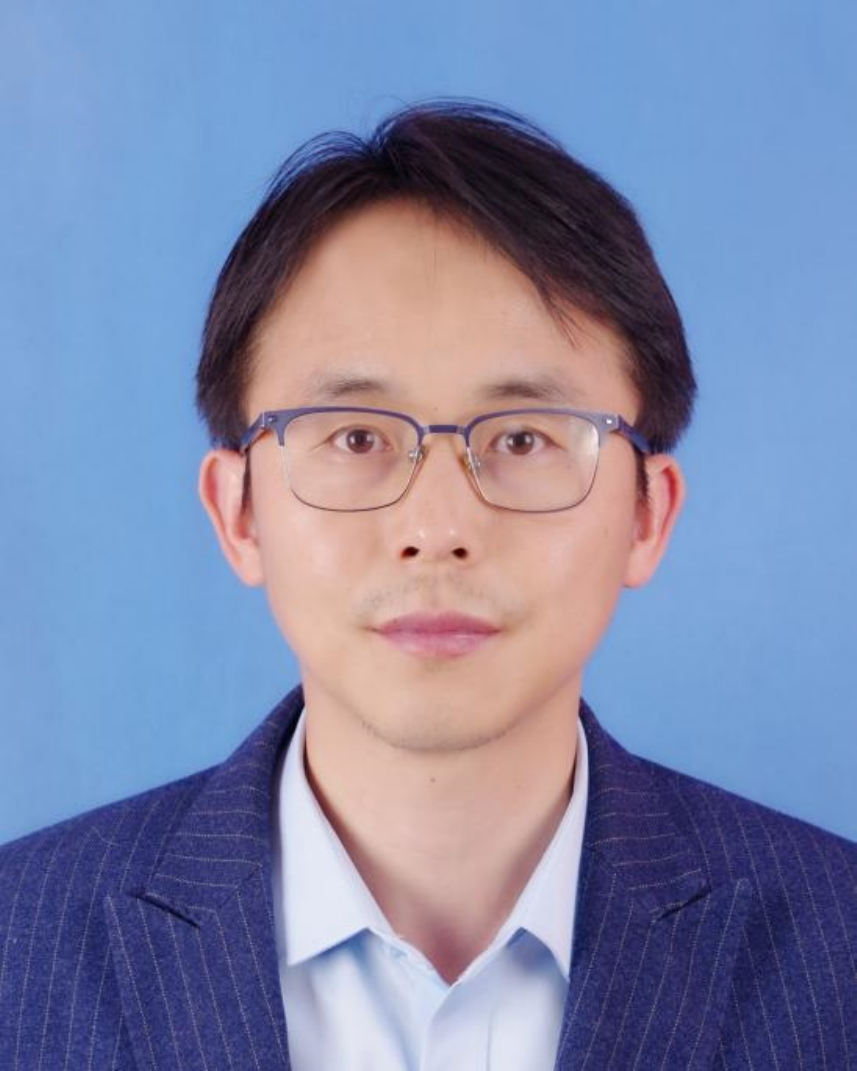}}]{Xuewen Dong}
	(Member, IEEE) received the B.E., M.S., and Ph.D. degrees in computer science and technology from Xidian University, Xi’an, China, in 2003, 2006, and 2011, respectively. From 2016 to 2017, he was with Oklahoma State University, Stillwater, OK, USA, as a Visiting Scholar. He is currently a Professor with the School of Computer Science and Technology, Xidian University. His research interests include cognitive radio network, wireless network security, and blockchain.
\end{IEEEbiography}

\begin{IEEEbiography}[{\includegraphics[width=1in,height=1.25in,clip,keepaspectratio]{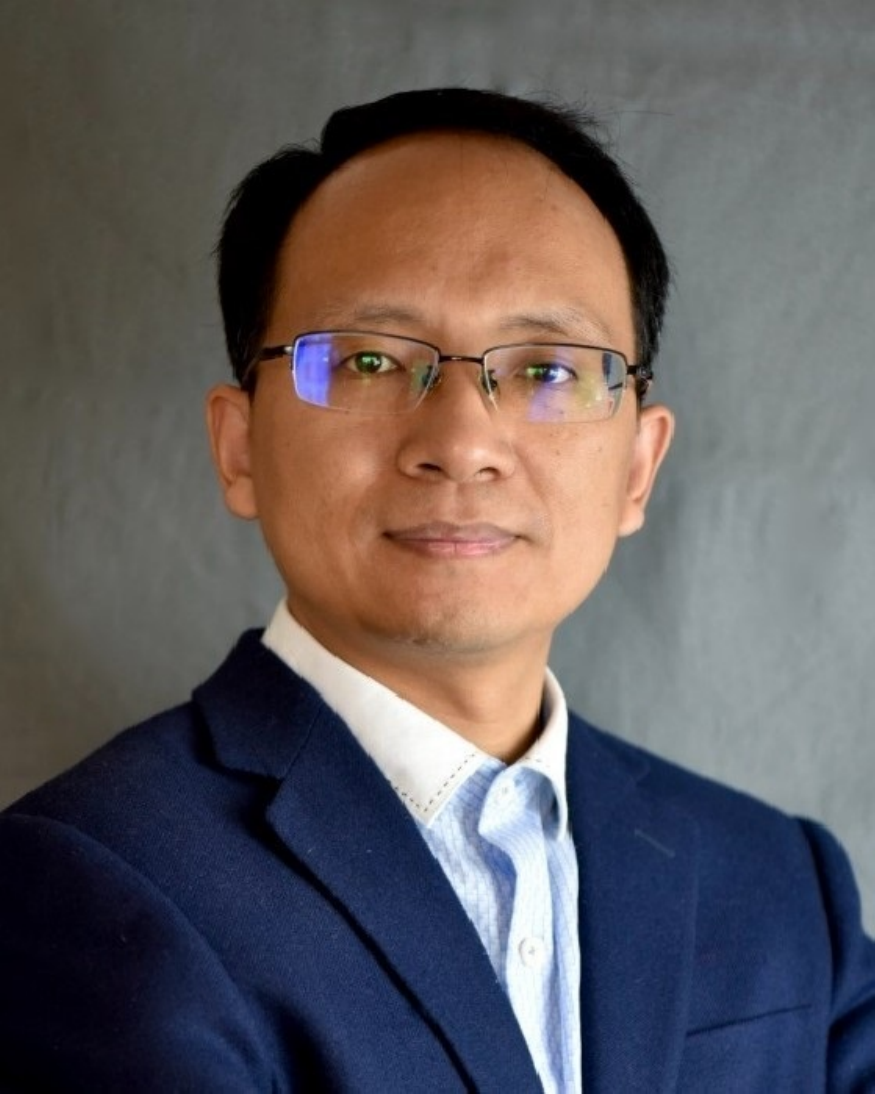}}]{Zhiguo Wan}
	(Member, IEEE) received the B.S. degree in computer science from Tsinghua University in 2002 and the Ph.D. degree in information security from the National University of Singapore in 2007. He is a Principal Investigator with the Zhejiang Laboratory, Hangzhou, Zhejiang, China. He was a Post-Doctoral Researcher with the Katholieke University of Leuven, Belgium; an Assistant Professor with Tsinghua University; and an Associate Professor with Shandong University, China. He has published more than 80 peer-reviewed conference and journal articles, including IEEE Transactions on Dependable and Secure Computing, IEEE Transactions on Information Forensics and Security, IEEE Transactions on Mobile Computing, and IEEE Transactions on Parallel and Distributed Systems. His main research interests include security and privacy for distributed systems, such as cloud computing, the Internet of Things, and blockchain. He has been a PC Member of dozens of international conferences, including IEEE INFOCOM.
\end{IEEEbiography}	

\begin{IEEEbiography}[{\includegraphics[width=1in,height=1.25in,clip,keepaspectratio]{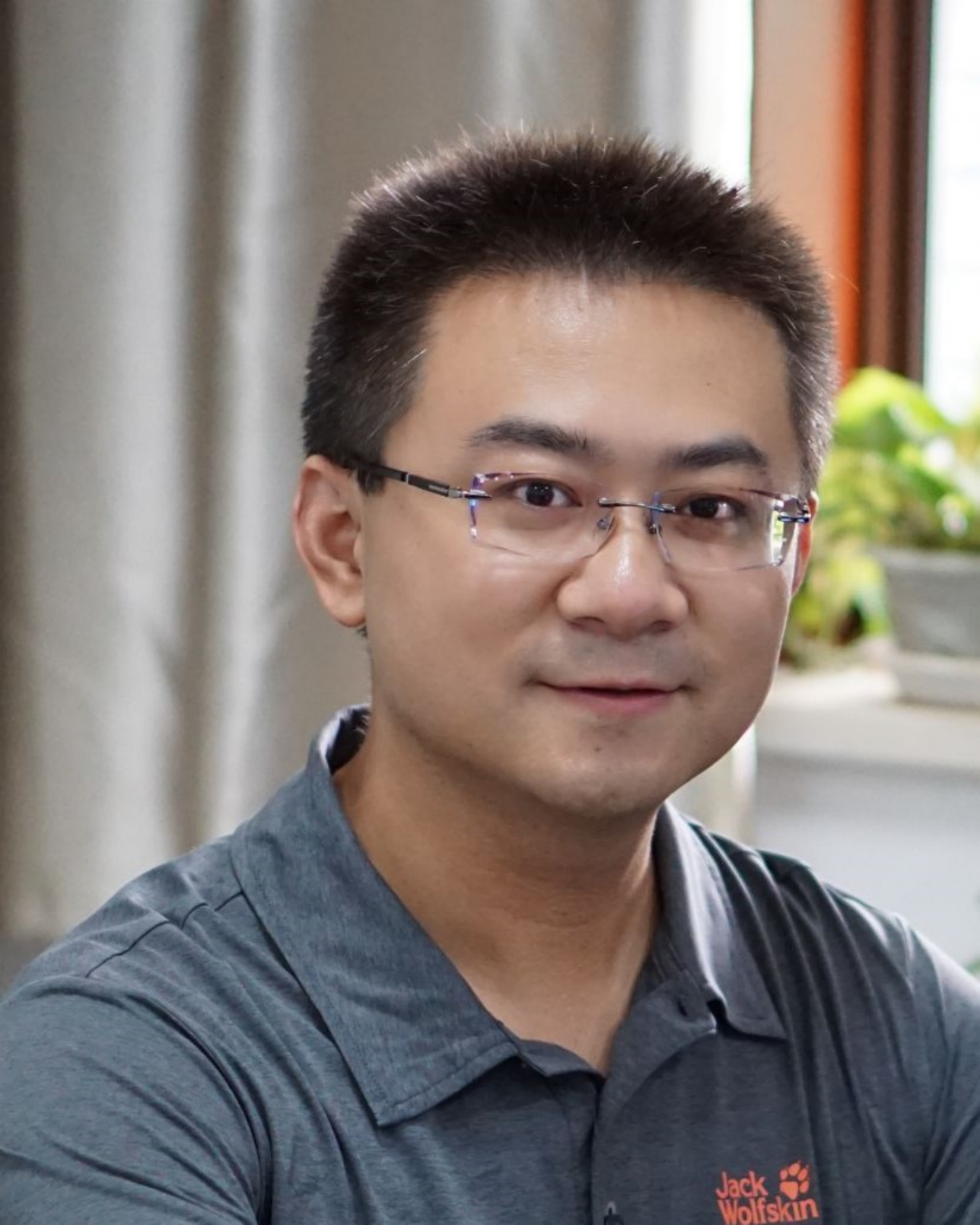}}]{Di Lu}
(Member, IEEE) received the BS, MS and PhD degrees in computer science and technology from Xidian University, China in 2006, 2009 and 2014. Currently, he is an associate professor with the school of Computer Science and Technology, Xidian University. His research interests include trusted computing, cloud computing, system and network security.
\end{IEEEbiography}

\begin{IEEEbiography}[{\includegraphics[width=1in,height=1.25in,clip,keepaspectratio]{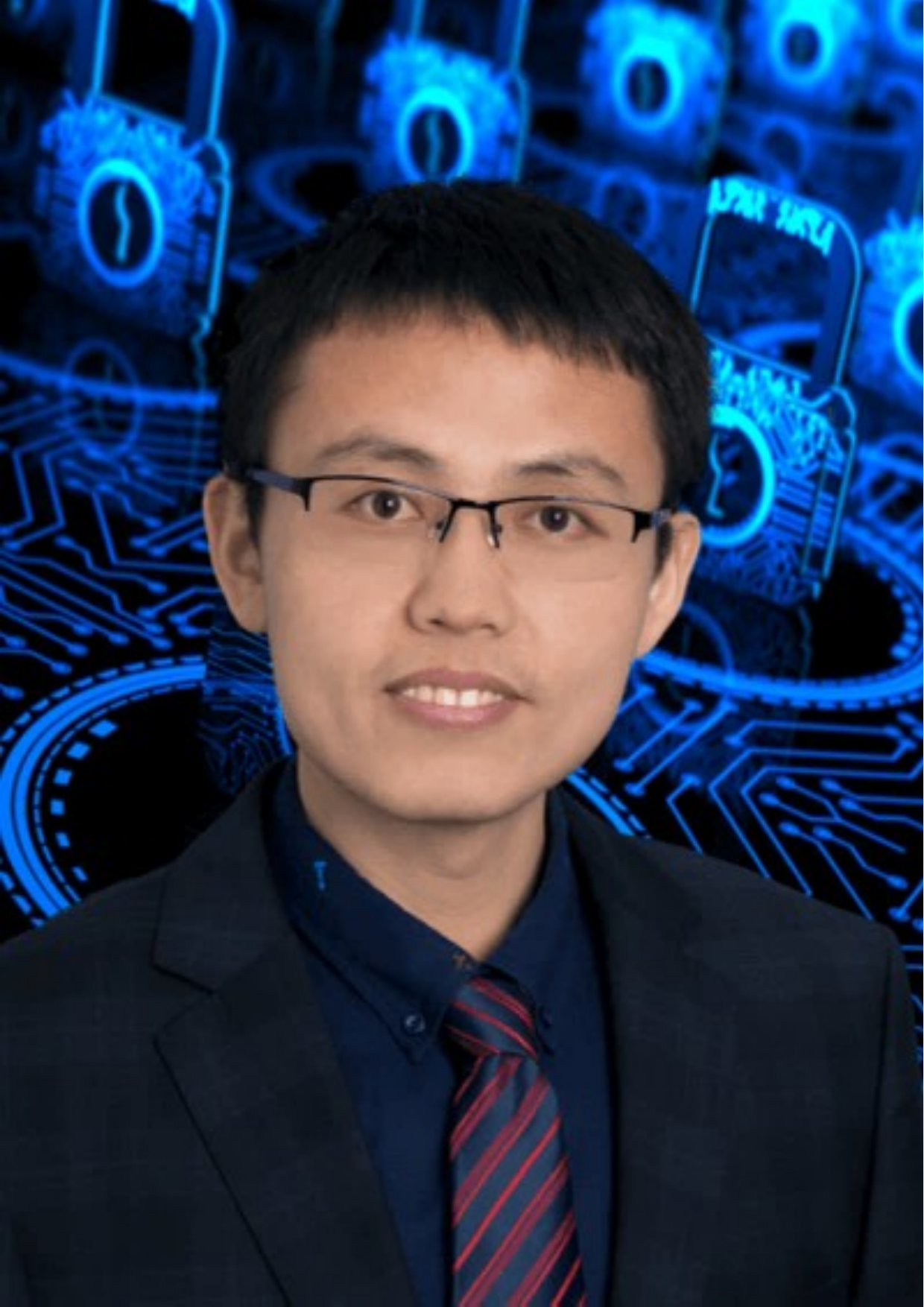}}]{Yushu Zhang}
	(Senior Member, IEEE) received the B.S. degree from the School of Science, North University of China, Taiyuan, China, in 2010, and the Ph.D. degree from the College of Computer Science, Chongqing University, Chongqing, China, 2014. He is currently a Professor with the College of Computer Science and Technology, Nanjing University of Aeronautics and Astronautics, China. He has produced more than 100 research peer-reviewed journal and conference papers. His research interests include multimedia security and blockchain. He is an Associate Editor of Information Sciences and Signal Processing.
\end{IEEEbiography}

\begin{IEEEbiography}[{\includegraphics[width=1in,height=1.25in,clip,keepaspectratio]{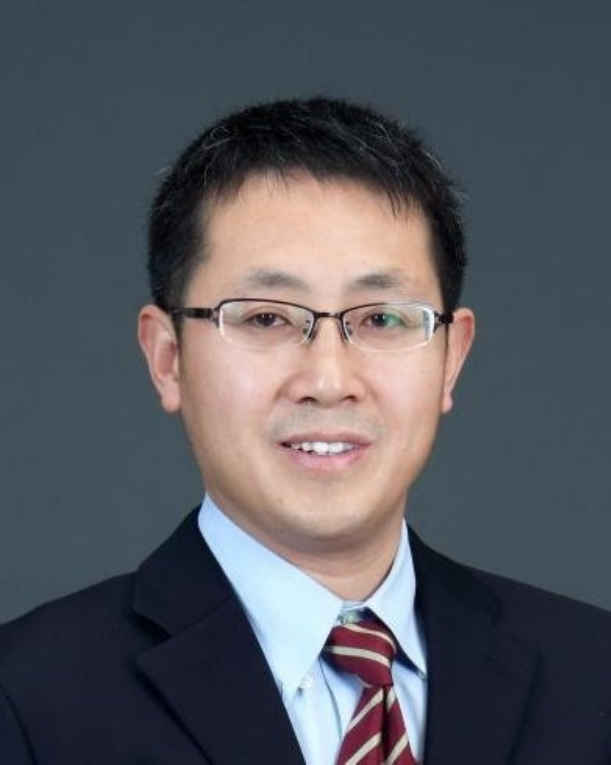}}]{Yulong Shen}
	(Member, IEEE) received the B.S. and M.S. degrees in computer science and the Ph.D. degree in cryptography from Xidian University, Xi’an, China, in 2002, 2005, and 2008, respectively. He is currently a Professor with the School of Computer Science and Technology, Xidian University, where he is also an Associate Director of the Shaanxi Key Laboratory of Network and System Security and a member of the State Key Laboratory of Integrated Services Networks. His research interests include wireless network security and cloud computing security. He has also served on the technical program committees of several international conferences, including ICEBE, INCoS, CIS, and SOWN.
\end{IEEEbiography}

\end{document}